\documentclass{article}

\usepackage{xspace}

\usepackage{wrapfig}
\usepackage{subfigure}

\usepackage{multirow}
\usepackage{tikz}

\usepackage{amsmath}
\usepackage{amsfonts}
\usepackage{amssymb}
\usepackage{amsthm}

    \usepackage{cite}
    \usepackage{wrapfig}
    \ifdim \pgfversion pt < 2 pt
    \typeout{}
    \typeout{}
    \typeout{}
    \typeout{}
    \typeout{}
    \typeout{}
    \tyxpeout{********************}
    \typeout{***** WARNING ******}
    \typeout{********************}
    \typeout{}
    \typeout{PGF version is TOO OLD to typeset the multi-Nash
      counterexample.}
    \typeout{}
    \typeout{}
    \else
    \usetikzlibrary{calc}
    \fi

\newcommand{\eg}{e.g.}
\newcommand{\ie}{i.e.}

\newtheorem*{claim}{Claim}

\theoremstyle{definition}
\newtheorem{definition}{Definition}

\newtheorem{thm}{Theorem}[section]
\newtheorem{theorem}[thm]{Theorem}

\newtheorem{lemma}[thm]{Lemma}

\title{On the Structure of Weakly Acyclic Games\thanks{This is the revised and expanded version of a paper that appeared in the \emph{Proceedings of SAGT 2010}~\cite{fjs10sagt}.  Partially supported by the DIMACS Special Focus on Communication Security and Information Privacy.}}

\author{Alex Fabrikant\thanks{Google Research, 1600 Amphitheatre Pkwy., Mountain View, CA 94043   \texttt{alexf@cal.berkeley.edu} Supported by a Cisco URP grant and a Princeton University postdoctoral fellowship.}
\and
Aaron D. Jaggard\thanks{Department of Computer Science, Colgate University and DIMACS Center, Rutgers University. 13 Oak Dr., Hamilton, NY 13346.  \texttt{adj@dimacs.rutgers.edu} Partially supported by NSF grants 0751674 and 0753492.}
\and
Michael Schapira\thanks{Department of Computer Science, Princeton University, 35 Olden Street, Princeton, NJ 08540, \texttt{ms7@cs.princeton.edu} Supported by NSF grant 0331548.}}

\date{}

\renewcommand{\SS}{\ensuremath{\mathsf{SS}}}
\newcommand{\USS}{\ensuremath{\mathsf{USS}}}
\newcommand{\SSS}{\ensuremath{\mathsf{SSS}}}

\usepackage{color}
\definecolor{light1}{rgb}{0.5,0.5,0.6}

\newcommand{\F}{\footnotesize}

\newcommand{\onlyshort}[1]{}
\newcommand{\onlylong}[1]{#1}

\begin{document}

\maketitle

\begin{abstract}
The class of \emph{weakly acyclic games}, which includes potential games and dom\-i\-nance-solvable games, captures many practical application domains.  In a weakly acyclic game, from any starting state, there is a sequence of better-response moves that leads to a pure Nash equilibrium; informally, these are games in which natural distributed dynamics, such as better-response dynamics, cannot enter \emph{inescapable oscillations}. We establish a novel link between such games and the existence of pure Nash equilibria in subgames. Specifically, we show that the existence of a \emph{unique} pure Nash equilibrium in every \emph{subgame} implies the weak acyclicity of a game. In contrast, the possible existence of \emph{multiple} pure Nash equilibria in every subgame is insufficient for weak acyclicity in general; here, we also systematically identify the special cases (in terms of the number of players and strategies) for which this is sufficient to guarantee weak acyclicity.
\end{abstract}

\section{Introduction}

In many domains, convergence to a pure Nash equilibrium is a
fundamental problem. In many engineered agent-driven systems that fare
best when steady at a pure Nash equilibrium, convergence to equilibrium is
expected\cite{MAS07,LSZ08} to happen via \emph{better-response} (or \emph{best-response}) \emph{dynamics}: Start at some strategy profile. Players take turns, in some
arbitrary order, with each player making a better response (best response)
to the strategies of the other players, \ie, choosing a strategy that
increases (maximizes) their utility, given the current strategies of
the other players. Repeat this process until no player wants to switch
to a different strategy, at which point we reach a pure Nash equilibrium.

For better-response dynamics to converge to a pure Nash equilibrium
regardless of the initial strategy profile,
a \emph{necessary} condition is that, from every strategy profile, there
exist \emph{some} better-response improvement path (that is, a sequence
of players' better responses) leading from that strategy profile to a
pure Nash equilibrium.  Games for which this property holds are called
``weakly acyclic games''~\cite{PeytonYoung93,Mil96}\footnote{In some
  of the economics literature, the terms ``weak finite-improvement
  path property'' (weak FIP) and ``weak finite best-response path
  property'' (weak FBRP) are also used, for weak acyclicity under better- and
best-response dynamics, respectively.}.
Both potential games~\cite{Ros73,MS96} and dominance-solvable games~\cite{Moulin79} are special cases of weakly acyclic games.

In a game that is not weakly acyclic, there is at least one starting state from which the game is guaranteed
to oscillate indefinitely under better-/best-response
dynamics.  Moreover, the weak acyclicity of a game
implies that natural decentralized dynamics (\eg, randomized
better-/best-response, or no-regret dynamics) are stochastically
guaranteed to reach a pure Nash equilibrium~\cite{MYAS07,PeytonYoung93}. Thus,
weakly acyclic games capture the possibility of reaching pure Nash
equilibria via simple, local, globally-asynchronous interactions
between strategic agents, independently of the starting state. We
assert this is \emph{the} realistic notion of ``convergence'' in most
distributed systems.

\subsection{A Motivating Example}

We now look at an example inspired by interdomain routing that has
this natural form of convergence despite it being, formally, possible
that the network will never converge.  In keeping with results that we
study here, we consider best-response dynamics of a routing model in
which each node can see each other node's current strategy, i.e., its
``next hop'' (the node to which it forwards its data en route to the
destination), as contrasted with models where nodes depending on path
announcements to learn this information.  (Levin et al.~\cite{LSZ08}
formalized routing dynamics in which nodes learn about forwarding
through path announcements.)

\begin{figure}[hbt]
\begin{center}
\vspace{-1em}
\includegraphics[height=1in]{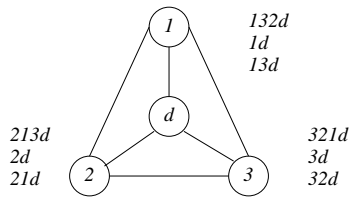}
\caption{Instance of the interdomain routing game that is weakly acyclic and has a best-response cycle.}\label{fig:wa-spp}
\end{center}
\end{figure}

Consider the network on four nodes shown in Fig.~\ref{fig:wa-spp}.
Each of the nodes $1$, $2$, and $3$ is trying to get a path for
network traffic to the destination node $d$.  A strategy of a node $i$
is a choice of a neighbor to whom $i$ will forward traffic; the
strategy space of node $i$, $S_i$, is its neighborhood in the graph.
The utility of the destination $d$ is independent of the outcome, and the utility
$u_i$ of node $i\neq d$ depends only on the path that $i$'s traffic
takes to the destination (and is $-\infty$ if there is no path).  We
only need to consider the relationships between the values of $u_i$ on
all possible paths; the actual values of the utilities do not make a
difference.  Using $132d$ to denote the path from $1$ to $2$ to $3$ to
$d$, and similarly for other paths, here we assume the following:
$u_1(132d) > u_1(1d) > u_1(13d) > -\infty$; $u_2(213d) > u_2(2d) >
u_2(21d) > -\infty$; $u_3(321d) > u_3(3d) > u_3(32d) > -\infty$; and
$u_i(P)=-\infty$ for all other paths $P$, e.g., $u_1(12d) = -\infty$.
These preferences are indicated by the lists of paths in order of
decreasing preference next to the nodes in Fig.~\ref{fig:wa-spp}.

The unique pure Nash equilibrium in the game in Fig.~\ref{fig:wa-spp} is $(d,d,d)$, and, ideally, the dynamics would always converge to it.  However, there exists a best-response cycle in this game as shown in Fig.~\ref{fig:br-cycle}.  Here, each triple lists the paths that nodes $1$, $2$, and $3$ get; the nodes' strategies correspond to the second node in their
respective paths.  The node above the arrow between two triples is the
one that makes a best response to get from one triple to the next.
\begin{figure}[ht]
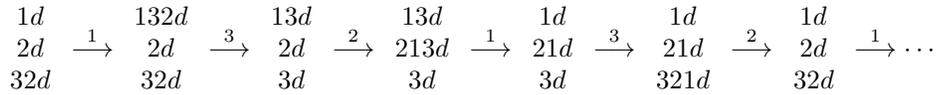

\begin{center}
\[
\begin{array}{c}
  1d \\
  2d \\
  32d
\end{array}
\overset{1}{\longrightarrow}
\begin{array}{c}
  132d \\
  2d \\
  32d
\end{array}
\overset{3}{\longrightarrow}
\begin{array}{c}
  13d \\
  2d \\
  3d
\end{array}
\overset{2}{\longrightarrow}
\begin{array}{c}
  13d \\
  213d \\
  3d
\end{array}
\overset{1}{\longrightarrow}
\begin{array}{c}
  1d \\
  21d \\
  3d
\end{array}
\overset{3}{\longrightarrow}
\begin{array}{c}
  1d \\
  21d \\
  321d
\end{array}
\overset{2}{\longrightarrow}
\begin{array}{c}
  1d \\
  2d \\
  32d
\end{array}
\overset{1}{\longrightarrow}
\cdots
\]
\end{center}
\caption{\label{fig:br-cycle}A best-response cycle for the game in Fig.~\ref{fig:wa-spp}.}
\end{figure}

Once the network is in one of these states,\footnote{For example, this might happen if the link between $2$ and $d$ temporarily fails.  $2$ would always choose to send traffic to $1$ (if anywhere); $1$ would eventually converge to sending traffic directly to $d$ (with $2$ sending its traffic to $1$), and $3$ would then be able to send its traffic along $321d$.  Once the failed link between $2$ and $d$ is restored, $2$'s best response to the choices of the other nodes is to send its traffic directly to $d$, resulting in the first configuration of the cycle above.} there is a fair activation sequence (i.e., in which every node is activated infinitely often) such that each activated node best responds to the then-current choices of the other nodes and such that the network never converges to a stable routing tree (a pure Nash equilibrium).

Although this cycle seems to suggest that the network in
Fig.~\ref{fig:wa-spp} would be operationally troublesome, it is not as
problematic as we might fear.  From every point in the state space,
there is a sequence of best-response moves that leads to the unique
pure Nash equilibrium. We may see this by inspection in this case,
but this example also satisfies the hypotheses of our main theorem below.
So long as each node has some positive
probability of being the next activated node, then, with probability
$1$, the network will eventually converge to the unique stable
routing tree, regardless of the initial configuration of the network.

\subsection{Our Results}

Weak acyclicity is connected to the study of the
computational properties of \emph{sink equilibria} \cite{GMV05,FP08},
minimal collections of states from which best-response dynamics cannot
escape: a game is weakly acyclic if and only if all sinks are
``singletons'', that is, pure Nash equilibria. Unfortunately,
Mirrokni and Skopalik~\cite{MS09} found that reliably checking weak acyclicity
is extremely computationally intractable in the worst case
(PSPACE-complete) even in succinctly-described games. This means, inter alia, that not only can we not
hope to consistently check games in these categories for weak
acyclicity, but we cannot even hope to have general short ``proofs''
of weak acyclicity, which, once somehow found, could be tractably
checked.

With little hope of finding robust, effective ways to consistently
check weak acyclicity, we instead set out to find \emph{sufficient}
conditions for weak acyclicity: finding usable properties that imply
weak acyclicity may yield better insights into at least \emph{some}
cases where we need weak acyclicity for the application.

In this work, we focus on general normal-form games. Potential games,
the much better understood subcategory of weakly acyclic games, are known to have the following property, which we will refer to as
\emph{subgame stability}, abbreviated \SS: not only does a pure Nash
equilibrium exist in the game, but a pure Nash equilibrium exists in
each of its \emph{subgames}, \ie, in each game obtained from the
original game by the removal of players' strategies. Subgame stability
is a useful property in many contexts. For example, in network routing
games, subgame stability corresponds to the important requirement that
there be a stable routing state even in the presence of arbitrary
network malfunctions~\cite{GSW02}. We ask the following natural
question: When is the strong property of subgame stability
\emph{sufficient} for weak acyclicity?

Yamamori and Takahashi~\cite{YT02} prove the following two
results\footnote{Yamamori and Takahashi use the terms \emph{quasi-acyclicity} for weak acyclicity
  under best response, and \emph{Pure Nash Equilibrium Property} (\emph{PNEP})
  for subgame stability.}:

\vspace{0.05in}\noindent{\bf Theorem:}\cite{YT02} \emph{In $2$-player
  games, subgame stability implies weak acyclicity, even under best
  response.}

\vspace{0.05in}\noindent{\bf Theorem:}\cite{YT02} \emph{There exist
  $3 \times 3 \times 3$ games for which subgame stability holds that are
  not weakly acyclic under best response.}\vspace{0.05in}

Thus, subgame stability is sufficient for weak acyclicity in $2$-player games,
yet is not always sufficient for weak acyclicity in games with $n>2$ players. Our goal in this
work is to (1) identify sufficient conditions for weak acyclicity in the general $n$-player case; and (2) pursue a detailed
characterization of the boundary between games for which subgame stability does imply weak acyclicity and games for which it does not.

Our main result for $n$-player games shows that a constraint stronger
than \SS, that we term ``\emph{unique subgame stability}'' (\USS), is sufficient for weak acyclicity:

\vspace{0.05in}\noindent{\bf Theorem:} \emph{If every subgame of a game $\Gamma$ has a unique pure Nash
  equilibrium then $\Gamma$ is weakly acyclic, even under best response.
}\vspace{0.05in}

This result casts an interesting contrast against the negative
result in \cite{YT02}: \emph{unique} equilibria in subgames
guarantee weak acyclicity, but the existence of \emph{more}
pure Nash equilibria in subgames can lead to violations of weak acyclicity. Hence, perhaps counter-intuitively, too many stable states can
potentially result in persistent instability of local dynamics.  (A similar phenomenon is seen in recent work of Jaggard et al.~\cite{jsw11ics}, which studied settings in which multiple stable states preclude the possibility of a non-probabilistic guarantee of convergence.)

We consider $\SS$ games, $\USS$ games, and also the class of
\emph{strict and subgame stable} games $\SSS$, \emph{i.e.}, subgame
stable games which have no ties in the utility functions. We observe
that these three classes of games form the hierarchy $\USS\subset
\SSS\subset \SS$. We examine the number of players, number of
strategies, and the \emph{strictness} of the game (the constraint that
there are no ties in the utility function), and give a complete
characterization of the weak acyclicity implications of each of
these. Our contributions are summarized in Table \ref{table:summary}.

\begin{table}[tb!]
{\renewcommand{\F}{\scriptsize}
\newcommand{\G}{\checkmark}
\newcommand{\B}{\ensuremath{\mathsf{X}}}
\newcommand{\Bweak}{$\not \rightarrow$}
\begin{center}
\begin{tabular}{c|ccccccc|}
\cline{2-8}
& \multicolumn{2}{|c|}{\F 2 players}
& \multicolumn{4}{|c|}{\F 3 players}
& \multicolumn{1}{|c|}{\F 4+ players} \\
\cline{2-8}
& \multicolumn{1}{|c|}{\F $2\times M$}
& \multicolumn{1}{|c|}{\F $ 3 \times M$}
& \multicolumn{1}{|c|}{\F $2\times 2 \times 2$}
& \multicolumn{1}{|c|}{%
\newlength{\BLAH}%
\settowidth{\BLAH}{\F $2\times 3\times M$}%
\begin{minipage}[c]{\BLAH}%
\F $2\times 2 \times M$\\
\F $2\times 3 \times M$\end{minipage}}
& \multicolumn{1}{|c|}{\F $ 2\times  4 \times  4$}
& \multicolumn{1}{|c|}{\F $ 3\times  3 \times  3$}
& \multicolumn{1}{|c|}{\F $ 2\times  2\times  2\times  2$} \\
\hline
\multicolumn{1}{|c|}{\F $\exists$ pNE}
& \multicolumn{1}{|c|}{\F \G (Lma \ref{lma:2m})}
& \multicolumn{1}{|c|}{\F \B (easy)}
& \multicolumn{1}{|c|}{\F $\G^*$ (Lma \ref{lma:222})}
& \multicolumn{4}{|c|}{\F \B (easy)} \\
\hline
\multicolumn{1}{|c|}{\F \SS}
& \multicolumn{2}{|c|}{\F \G \cite{YT02}}
& \multicolumn{3}{|c|}{\F \B (Thm \ref{thm:counterex-ss})}
& \multicolumn{1}{|c|}{\F \B (Thm \ref{thm:counterex-ss} \& \cite{YT02})}
& \multicolumn{1}{|c|}{\F \B (Thm \ref{thm:counterex-ss})}
\\
\cline{1-1} \cline{4-8}
\multicolumn{1}{|c|}{\F \SSS}
& \multicolumn{2}{|c|}{\F } 
& \multicolumn{1}{|c|}{\F \G (Lma \ref{lma:222})}
& \multicolumn{1}{|c|}{\F \G (Thm \ref{thm:sss3p})}
& \multicolumn{1}{|c|}{\F \B (Thm \ref{thm:counterex3p})}
& \multicolumn{1}{|c|}{\F \B (Thm \ref{thm:counterex3p} \& \cite{YT02})}
& \multicolumn{1}{|c|}{\F \B (Thm \ref{thm:counterex4p})}
\\
\cline{1-1} \cline{4-8}
\multicolumn{1}{|c|}{\F \USS}
& \multicolumn{2}{|c|}{\F } 
& \multicolumn{5}{|c|}{\F \G (Thm \ref{thm:uniq})}
\\
\hline
\end{tabular}\\[1em]
\caption{\label{table:summary} Results summary: The impact of
  \USS/\SSS/\SS\ on weak acyclicity (with $M > 2$): \G\ marks classes with
  guaranteed weak acyclicity, even under best response; \B\ marks classes
  with examples that are not weakly acyclic even under
  better response. ${}^*$: only for strict games}
\end{center}
\vspace{-2em}
}
\end{table}

\subsection{Other Related Work}

Weak acyclicity has been specifically addressed in a handful of
specially-struc\-tured games: in an applied setting, BGP with backup
routing~\cite{ES09}, in a game-theoretical setting, games with
``strategic complementarities''~\cite{FM01,KTY05} (a supermodularity
condition on lattice-structured strategy sets), and in an algorithmic
setting, in several kinds of succinct games~\cite{MS09}.
Milchtaich~\cite{Mil96} studied Rosenthal's congestion
games~\cite{Ros73} and proved that, in interesting cases, such games
are weakly acyclic even if the payoff functions (utilities) are not
universal but player-specific.  Marden et al.~\cite{MAS07} formulated
the cooperative-control-theoretic consensus problem as a potential
game (implying that it is weakly acyclic); they also defined and
investigated a time-varying version of weak acyclicity.

\subsection{Outline of Paper}

In the following, we recall the relevant concepts and definitions
in Section~\ref{sec:defs}, present our sufficient condition for weak acyclicity in
Section~\ref{sec:suff}, and our characterization of weak acyclicity implications in
Section~\ref{sec:neg}.

\section{Weakly acyclic games and subgame stability}\label{sec:defs}

We use standard game-theoretic notation. Let $\Gamma$ be a
\emph{normal-form game} with $n$ players $1,\ldots,n$. We denote by
$S_i$ be the \emph{strategy space} of the $i^\mathrm{th}$ player. Let
$S=S_1\times \ldots \times S_n$, and let $S_{-i}=S_1\times
\ldots\times S_{i-1}\times S_{i+1}\times \ldots\times S_n$ be the
cartesian product of all strategy spaces but $S_i$. Each player $i$
has a \emph{utility function} $u_i$ that specifies $i$'s
\emph{payoff} in every strategy-profile of the players. For each
strategy $s_i\in S_i$, and every $(n-1)$-tuple of strategies
$s_{-i}\in S_{-i}$, we denote by $u_i(s_i,s_{-i})$ the utility
of the strategy profile in which player $i$ plays $s_i$ and
all other players play their strategies in $s_{-i}$.  We will make use of the following definitions.

\begin{definition} [better-response strategies]
A strategy $s'_i\in S_i$ is a \emph{better-response} of player $i$ to a
strategy profile $(s_i,s_{-i})$ if
$u_i(s'_i,s_{-i})>u_i(s_i,s_{-i})$.
\end{definition}

\begin{definition} [best-response strategies]
A strategy $s_i\in S_i$ is a \emph{best response} of player $i$ to a
strategy profile $s_{-i}\in S_{-i}$ of the other players if $s_i \in
\mathrm{argmax}_{s'_i\in S_i} u_i(s'_i,s_{-i})$
\end{definition}

\begin{definition} [pure Nash equilibria]
A strategy profile $s$ is a \emph{pure Nash equilibrium} if, for every player $i$, $s_i$ is a best response of $i$ to $s_{-i}$.
\end{definition}

\begin{definition} [better- and best-response improvement paths]
A \emph{better-response (best-response) improvement path} in a game $\Gamma$ is
a sequence of strategy profiles $s^1,\ldots,s^k$ such that for every
$j\in [k-1]$ (1) $s^j$ and $s^{j+1}$ only differ in the strategy of a
single player $i$ and (2) $i$'s strategy in $s^{j+1}$ is a better response
to $s^j$ (best response to $s^j_{-i}$ and $u_i(s^{j+1}_i,s^j_{-i})>u_i(s^j_i,s^j_{-i})$).  The \emph{better-response dynamics (best-response dynamics) graph} for $\Gamma$ is the graph on the strategy profiles in $\Gamma$ whose edges are the better-response (best-response) improvement paths of length 1.
\end{definition}

We will use $\Delta R_\Gamma (s)$ and $BR_\Gamma (s)$ to denote the
set of all states reachable by, respectively, better and best responses
when starting from $s$ in $\Gamma$.

We are now ready to define weakly acyclic games \cite{PeytonYoung93}. Informally,
a game is weakly acyclic if a pure Nash equilibrium can be reached from any initial strategy profile
via a better-response improvement path.

\begin{definition} [weakly acyclic games]
A game $\Gamma$ is \emph{weakly acyclic} if, from every strategy profile
$s$, there is a better-response improvement path $s^1\ldots,s^k$ such
that $s^1=s$, and $s^k$ is a pure Nash equilibrium in $\Gamma$. (I.e.,
for each $s$, there's a pure Nash equilibrium in $\Delta R_\Gamma (s)$.)
\end{definition}

We also coin a parallel definition based on best-response dynamics.

\begin{definition} [weak acyclicity under best response]
A game $\Gamma$ is \emph{weakly a\-cy\-clic under best response} if, from every
strategy profile $s$, there is a best-response improvement path $s^1\ldots,s^k$ such
that $s^1=s$ and $s^k$ is a pure Nash equilibrium in $\Gamma$. (I.e.,
for each $s$, there's a pure Nash equilibrium in $BR_\Gamma (s)$.)
\end{definition}

Weak acyclicity of either kind is equivalent to requiring that, under
the respective dynamics, the game has no ``non-trivial'' sink
equilibria~\cite{GMV05,FP08}, \ie, sink equilibria containing more
than one strategy profile. Conventionally, sink equilibria are defined
with respect to best-response dynamics, but the original definition by
Goemans et al.~\cite{GMV05} takes into account better-response dynamics as well.

The following follows easily from definitions:
\begin{claim}
If a game is weakly acyclic under best response then it is weakly acyclic.
\end{claim}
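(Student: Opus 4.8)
The plan is to show that every best-response improvement path is, in particular, a better-response improvement path, so that the witnessing path from each profile $s$ to a pure Nash equilibrium carries over verbatim. Fix a game $\Gamma$ that is weakly acyclic under best response, and fix an arbitrary strategy profile $s$. By hypothesis there is a best-response improvement path $s^1,\ldots,s^k$ with $s^1 = s$ and $s^k$ a pure Nash equilibrium of $\Gamma$. I would like to argue this same sequence is a better-response improvement path.

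The single step to check is condition (2) in the definition of improvement paths. For each $j \in [k-1]$, the profiles $s^j$ and $s^{j+1}$ differ only in the strategy of one player $i$, and by the definition of a best-response improvement path we have $u_i(s^{j+1}_i, s^j_{-i}) > u_i(s^j_i, s^j_{-i})$ --- the strict-improvement clause is built into that definition precisely so that trivial "self-loop" best responses are excluded. But this inequality is exactly the requirement that $s^{j+1}_i$ be a better response of $i$ to $(s^j_i, s^j_{-i})$. Hence $s^1,\ldots,s^k$ satisfies both conditions of a better-response improvement path.

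Since $s$ was arbitrary and the notion of pure Nash equilibrium does not depend on which dynamics we consider, it follows that from every strategy profile $\Gamma$ admits a better-response improvement path to a pure Nash equilibrium; i.e., $\Gamma$ is weakly acyclic. Equivalently, in the language introduced just above, the edge set of the best-response dynamics graph is contained in that of the better-response dynamics graph, so $BR_\Gamma(s) \subseteq \Delta R_\Gamma(s)$ for every $s$, and a pure Nash equilibrium in $BR_\Gamma(s)$ is a fortiori one in $\Delta R_\Gamma(s)$.

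There is essentially no obstacle here: the only thing to be careful about is not to conflate a "best response" (an element of $\mathrm{argmax}$, which may tie the current strategy) with a step of best-response \emph{dynamics}; the definition of a best-response improvement path already resolves this by demanding strict improvement, which is what makes the inclusion immediate.
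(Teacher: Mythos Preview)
Your proof is correct and follows the same idea as the paper's own argument: every best-response improvement path is a fortiori a better-response improvement path, so the witnessing paths transfer directly. The paper states this in a single sentence (augmenting the graph with better-response edges preserves the existing best-response paths), and your version simply unpacks that observation more carefully.
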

\begin{proof}
If $\Gamma$ is weakly acyclic under best response, the paths to equilibrium
from each edge will still be there if we augment the state space with
additional better-response transitions.
On the other hand, the game in Figure~\ref{fig:betternotbest}, mentioned in, \eg,~\cite{MYAS07}, is weakly
acyclic, but it is not weakly acyclic under best response.
\end{proof}

\begin{figure}[ht]
{ 
\ifdim \pgfversion pt < 2 pt \else
\tikzstyle{every picture}+=[remember picture]
\fi
\begin{center}
\newcommand{\N}[2]{\ifdim \pgfversion pt < 2 pt #2 \else
\tikz[baseline]{\node[anchor=base] (#1){#2};} \fi}
\begin{tabular}{c|ccc|}
  \cline{2-4}
  &
  $H$ & $T$ & $X$ \\
  \hline 
  \multicolumn{1}{|c|}{$H$} &
  \F \N{mp00}{$2,0$}          & \F \N{mp01}{$0,2$} & \F \N{mp02}{$0,0$}
  \\
  %
  \multicolumn{1}{|c|}{$T$} &
  \F \N{mp10}{$0,2$} & \F \N{mp11}{$2,0$}           & \F \N{mp12Q}{$0,0$}
  \\
  %
  \multicolumn{1}{|c|}{$X$} &
  \F \N{mp20}{$0,0$} & \F \N{mp21}{$1,0$}           & \F \N{mp22}{$3,3$}
  \\
  \hline 
  %
  %
\end{tabular}
\end{center}
\ifdim \pgfversion pt < 2 pt
\else
\begin{tikzpicture}[overlay,>=latex]
\definecolor{edgecol}{rgb}{0.05,0.15,0.25}
\begin{scope}[color=edgecol,densely dashed]
\path[->] ($(mp01.base)+(.2,0)$) edge [out=0,in= 0,looseness=1.0] ($(mp21.base)+(.2,.1)$);
\path[->] ($(mp21.base)+(.2,0)$) edge [out=0,in= 180,looseness=1.0] ($(mp22.base)+(-.2,0)$);
\end{scope}
\begin{scope}[color=edgecol]
\path[->] ($(mp01.center)+(0,-.05)$) edge [out=-90,in= 90,looseness=1.0] ($(mp11.center)+(0,.1)$);
\path[->] ($(mp11.center)+(-.2,0)$) edge [out=180,in= 0,looseness=1.0] ($(mp10.center)+(.2,0)$);
\path[->] ($(mp10.center)+(0,.1)$) edge [out=90,in= -90,looseness=1.0] ($(mp00.center)+(0,-.05)$);
\path[->] ($(mp00.center)+(.2,0)$) edge [out=0,in= 180,looseness=1.0] ($(mp01.center)+(-.2,0)$);

\end{scope}
\end{tikzpicture}
\fi
}
\vspace{-1em}
\caption{\label{fig:betternotbest} Matching pennies with a ``better-response'' escape route (dashed arrows), but a best response persistent cycle (solid arrows)~\cite{MYAS07}.}
\vspace{-1em}
\end{figure}

Curiously, all of our results apply both to weak acyclicity in its
conventional better-response sense and to weak acyclicity under best
response. Thus, unlike weak acyclicity itself, the conditions 
presented in this paper
are ``agnostic'' to the better-/best-response distinction (like the notion of pure Nash equilibria itself).

We now present the notion of subgame stability.

\begin{definition} [subgames]
A \emph{subgame} of a game $\Gamma$ is a game $\Gamma'$ obtained from $\Gamma$ via
the removal of players' strategies.
\end{definition}

\begin{definition} [subgame stability]
\emph{Subgame stability} is said to hold for a game $\Gamma$ if every
subgame of $\Gamma$ has a pure Nash equilibrium. We use \SS\ to
denote the class of subgame stable games.
\end{definition}

\begin{definition} [unique subgame stability]
\emph{Unique subgame stability} is said to hold for a game $\Gamma$ if
every subgame of $\Gamma$ has a unique pure Nash equilibrium. We use
\USS\ to denote the class of such games.
\end{definition}

We will also consider games in which no player has two or more equally
good responses to any fixed set of strategies played by the other
players. Following, e.g., \cite{NSZ08}, we define \emph{strict games} as
follows.

\begin{definition}[strict game]
A game $\Gamma$ is \emph{strict} if, for any two distinct strategy profiles $s = (s_1,\ldots,s_n)$ and $s'=(s'_1,\ldots,s'_n)$ such that there is some $j\in[n]$ for which $s'= (s'_j, s_{-j})$ (\ie, $s$ and $s'$ differ only in $j$'s strategy), then $u_j(s)\neq u_j(s')$.
\end{definition}

\begin{definition} [SSS]
We use \SSS\ to denote the class of games that are both strict and
subgame stable.
\end{definition}

It's easy to connect unique subgame stability and strictness. To do
so, we use the next definition, which will also play a role in our
main proofs.

\begin{definition}[subgame spanned by profiles]
For game $\Gamma$ with $n$ players and profiles $s^1,\ldots,s^k$ in $\Gamma$, the \emph{subgame spanned by  $s^1,\ldots,s^k$} is the subgame $\Gamma'$ of $\Gamma$ in which the strategy space for player $i$ is $S'_i = \{s^j_i | 1\leq j\leq k\}$.
\end{definition}

\begin{claim}
The categories $\USS$, $\SSS$, and $\SS$ form a hierarchy:
$\USS \subset \SSS \subset \SS$
\end{claim}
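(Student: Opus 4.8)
The plan is to prove both inclusions and then exhibit witnesses showing that each is strict. The inclusion $\SSS \subseteq \SS$ is immediate from the definitions, since a game that is strict \emph{and} subgame stable is in particular subgame stable. For $\USS \subseteq \SSS$, the subgame-stability half is again immediate: a unique pure Nash equilibrium is in particular a pure Nash equilibrium, so every subgame of a \USS\ game has one. The only real content is showing that unique subgame stability implies strictness.

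To prove that unique subgame stability implies strictness, I would argue the contrapositive using the ``subgame spanned by profiles'' construction introduced just before the claim. Suppose $\Gamma$ is not strict: there are distinct profiles $s$ and $s' = (s'_j, s_{-j})$ differing only in player $j$'s strategy, with $s_j \neq s'_j$ and $u_j(s) = u_j(s')$. Consider the subgame $\Gamma'$ spanned by $\{s, s'\}$: player $j$ has strategy set $\{s_j, s'_j\}$, while every other player $i$ has the singleton strategy set $\{s_i\}$ (since $s$ and $s'$ agree in coordinate $i$). The only two profiles of $\Gamma'$ are $(s_j, s_{-j})$ and $(s'_j, s_{-j})$. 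Each player $i \neq j$ trivially plays a best response in $\Gamma'$, having only one available strategy; and player $j$ is indifferent between its two strategies against $s_{-j}$, so both are best responses of $j$ in $\Gamma'$. Hence both profiles are pure Nash equilibria of $\Gamma'$, so $\Gamma \notin \USS$. This establishes $\USS \subseteq \SSS$.

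For strictness of the inclusions I would give small explicit games. For $\USS \subsetneq \SSS$, take the $2\times 2$ pure-coordination game with payoffs $(1,1)$ on the diagonal and $(0,0)$ off it: it is strict (check the four single-player deviations), every subgame has a pure Nash equilibrium (the full game has both diagonal profiles; each $1\times 2$ or $2\times 1$ subgame has a unique one by strictness; each $1\times 1$ subgame trivially has one), so it lies in $\SSS$, yet it has two pure Nash equilibria and so is not in \USS. For $\SSS \subsetneq \SS$, take a game exhibiting a tie but with no subgame lacking an equilibrium, e.g.\ a two-player game in which player $2$ has a single strategy $X$ and player $1$ has strategies $A,B$ with $u_1(A,X) = u_1(B,X)$: every subgame has a pure Nash equilibrium, but the game is not strict, so it is in $\SS \setminus \SSS$.

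There is no serious obstacle here; the one place to be careful is the step ``$\USS \Rightarrow$ strict'', and specifically the observation that a player with a one-element strategy set in a subgame is automatically playing a best response there, which is exactly what makes both profiles of $\Gamma'$ equilibria. The remaining points are routine definition-chasing and verification on the small witness games.
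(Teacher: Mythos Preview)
Your proof is correct and follows essentially the same approach as the paper: the inclusion $\SSS\subseteq\SS$ is by definition, and for $\USS\subseteq\SSS$ you argue the contrapositive via the subgame spanned by two profiles that witness a tie, exactly as the paper does. You go slightly further than the paper in also exhibiting explicit witnesses for the strictness of each inclusion, which the paper's proof does not bother to do.
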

\begin{proof}
$\SSS \subset \SS$ by definition. To see that $\USS\subset \SSS$ observe the following. If a game is not strict, there are $s_j,s'_j\in S_j$ and $s_{-j}$ such
that $u_j(s_j,s_{-j}) = u_j(s'_j,s_{-j})$.  Both strategy profiles in
the subgame spanned by $(s_j,s_{-j})$ and $(s'_j,s_{-j})$ are pure
Nash equilibria, violating unique subgame stability.
\end{proof}

\section{Sufficient condition for weak acyclicity with $n$ players}\label{sec:suff}
When is weak acyclicity guaranteed in $n$-player games for $n\geq 3$?
We prove that the existence of a \emph{unique} pure Nash equilibrium in
every subgame implies weak acyclicity. We note that this is not true when subgames
can contain multiple pure Nash equilibria~\cite{YT02}. Thus, while at first glance, introducing extra
equilibria might seem like it would make it harder to get ``stuck'' in a
non-trivial component of the state space with no ``escape path'' to an
equilibrium, this intuition is false; allowing extra pure Nash equilibria in subgames
actually enables the existence of non-trivial sinks.

\newcommand{\profile}[1]{\ensuremath{#1}}

\begin{theorem}\label{thm:uniq}
Every game $\Gamma$ that has a \emph{unique} pure Nash equilibrium
in every subgame $\Gamma'\subseteq\Gamma$ is weakly acyclic
under best-response (as are all of its subgames).
\end{theorem}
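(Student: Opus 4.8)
The plan is to prove the stronger ``best-response'' statement by strong induction on the size $\sum_i |S_i|$ of $\Gamma$; the assertion about subgames is then immediate, since every subgame of a subgame of $\Gamma$ is itself a subgame of $\Gamma$ and so inherits unique subgame stability. Throughout I would use that a \USS\ game is strict (the content of the earlier claim), so each player has a \emph{unique} best response to any fixed profile of the others. Let $s^*$ be the unique pure Nash equilibrium of $\Gamma$. If every $|S_i|=1$ there is nothing to prove; otherwise fix a player — say player $1$ — with $|S_1|\ge 2$ and write $a:=s^*_1$.

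The two objects to recurse on are the subgame $\Gamma^a$ obtained by deleting every strategy of player $1$ except $a$, and the subgame $\widehat\Gamma$ obtained by deleting only the strategy $a$. Both are proper subgames (this is where $|S_1|\ge 2$ is used), hence both are \USS\ and strictly smaller, so by the induction hypothesis both are weakly acyclic under best response. One checks that $s^*$ is a pure Nash equilibrium of $\Gamma^a$ (player $1$ has no alternative, and every other player faces the same strategy set and the same opposing profile as in $\Gamma$), so by uniqueness $s^*$ is \emph{the} equilibrium of $\Gamma^a$; let $\widehat s$ be the unique equilibrium of $\widehat\Gamma$, and note $\widehat s_1\neq a$. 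The first easy observation (call it Lemma~A) is that any profile $s$ with $s_1=a$ has a best-response improvement path to $s^*$ \emph{in $\Gamma$}: take such a path in $\Gamma^a$; there player $1$ never moves, so every step is a move of some player $j\neq 1$ to its best response over $S_j$, which is verbatim a legal best-response move in $\Gamma$.

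It then remains to route an arbitrary profile $s$ with $s_1\neq a$ to \emph{some} profile in which player $1$ plays $a$, and append the path of Lemma~A. To do this I would take a best-response improvement path from $s$ to $\widehat s$ in $\widehat\Gamma$ and replay it step by step inside $\Gamma$. A move of a player $j\neq 1$ is simultaneously a best-response move in $\widehat\Gamma$ and in $\Gamma$, so it transfers unchanged. A move of player $1$ transfers unchanged \emph{unless} player $1$'s best response in $\Gamma$ to the current opposing profile differs from its best response within $S_1\setminus\{a\}$; by strictness and optimality this can only happen when player $1$'s genuine $\Gamma$-best-response is $a$ itself (strictly better than everything in $S_1\setminus\{a\}$, in particular strictly better than player $1$'s current strategy, so the move is legal) — in which case we instead move player $1$ to $a$ and are done. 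If no such deviation ever occurs we reach $\widehat s$; since $\widehat s\neq s^*$, some player has a profitable deviation in $\Gamma$, and since players $j\neq 1$ and the strategies $S_1\setminus\{a\}$ are already ``saturated'' by $\widehat s$ being an equilibrium of $\widehat\Gamma$, that deviation must be player $1$ moving to $a$ — which again finishes the routing. Either way we reach a profile with player $1$ at $a$, and Lemma~A completes a best-response path to $s^*$; hence $\Gamma$ is weakly acyclic under best response.

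The crux, and the step I expect to require the most care, is the replay argument: verifying that the \emph{only} discrepancy between best responses in $\widehat\Gamma$ and in $\Gamma$ is that player $1$'s $\Gamma$-best-response may be the deleted strategy $a=s^*_1$, that this is always a strict improvement over player $1$'s current strategy, and that taking it lands us in the regime governed by Lemma~A. This is precisely where unique subgame stability is doing the work: strictness yields unique best responses, the uniqueness of $\widehat s$ pins the replay down to a single target (so there is no ``wandering'' through the removed strategy space), and the fact that the one exceptional strategy is exactly the equilibrium strategy $s^*_1$ is what lets the excursion terminate at $s^*$ — which is also why the statement fails once subgames may carry several equilibria.
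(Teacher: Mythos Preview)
Your proof is correct and takes a genuinely different route from the paper's. Both induct on subgame size, but the paper, given a starting profile $s$, passes to the subgame $\Gamma''$ \emph{spanned by} $BR_{\Gamma'}(s)$ --- a dynamic object depending on $s$ --- and splits into cases according to whether the equilibrium $s_{\Gamma'}$ lies in $\Gamma''$; the transfer of best-response paths from $\Gamma''$ back to $\Gamma'$ is handled by a separate technical lemma (Lemma~\ref{lma:br}). Your decomposition is static: you single out one player and its equilibrium strategy $a=s^*_1$ once and for all, and route every profile through the two fixed subgames $\Gamma^a$ and $\widehat\Gamma$, with the replay argument doing the work of the paper's lemma. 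The gain is that your argument is more elementary and explicitly constructive --- it essentially gives an algorithm for building the improvement path to $s^*$ --- while the paper's spanned-subgame argument is more uniform (no distinguished player) and perhaps isolates more cleanly where non-uniqueness would break things: in its case~(ii), if $\Gamma''$ could carry a second equilibrium, the contradiction dissolves. Both proofs lean on strictness (inherited from \USS) in exactly the same way, to pin down a unique best response at each step.
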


The proof of this theorem uses the following technical lemma:

\begin{lemma}\label{lma:br}
If $\profile{s}$ is a strategy profile in $\Gamma$, and $\Gamma'$ is
the subgame of $\Gamma$ spanned by $BR_\Gamma (\profile{s})$, then any
best-response improvement path $s,s^1,\ldots,s^k$ in $\Gamma'$ that starts at $s$ is also a
best-response improvement path in $\Gamma$.
\end{lemma}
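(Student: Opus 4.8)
The plan is to argue by induction on the length of the best-response improvement path $s, s^1, \ldots, s^k$ in $\Gamma'$, showing that each single step is still a best-response step when viewed in the larger game $\Gamma$. The only thing that can go wrong is that a move which is a best response among the strategies available in $\Gamma'$ fails to be a best response in $\Gamma$, because some strategy of the moving player that was deleted in passing to $\Gamma'$ would have yielded strictly higher utility. So the heart of the argument is to rule this out, using the fact that $\Gamma'$ is spanned not by an arbitrary set of profiles but specifically by $BR_\Gamma(s)$, the set of all profiles reachable from $s$ by best responses in $\Gamma$.

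First I would set up the induction: suppose $s, s^1, \ldots, s^{j}$ is a prefix that we have already shown to be a best-response improvement path in $\Gamma$; in particular every $s^\ell$ with $\ell \le j$ lies in $BR_\Gamma(s)$, hence all of its coordinates lie in the corresponding strategy spaces $S'_i$ of $\Gamma'$ (this is immediate for $s^0 = s$ too, since $s \in BR_\Gamma(s)$). Now consider the next step $s^{j} \to s^{j+1}$, in which some player $i$ changes strategy. By hypothesis this is a best-response step in $\Gamma'$, so $s^{j+1}_i \in \arg\max_{t \in S'_i} u_i(t, s^{j}_{-i})$ and $u_i(s^{j+1}_i, s^j_{-i}) > u_i(s^j_i, s^j_{-i})$. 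The strict-improvement condition (condition (2) of the definition of best-response improvement path) transfers verbatim to $\Gamma$, since it only involves strategies already present. What needs proof is that $s^{j+1}_i$ is a best response \emph{in $\Gamma$}, i.e.\ that $u_i(s^{j+1}_i, s^j_{-i}) \ge u_i(t, s^j_{-i})$ for every $t \in S_i$, not just for $t \in S'_i$.

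Here is the key observation. Fix any $t \in S_i$, and let $p = (t, s^j_{-i})$ be the profile obtained from $s^j$ by letting $i$ deviate to $t$. Since $s^j \in BR_\Gamma(s)$ and from $s^j$ player $i$ can make a best response in $\Gamma$ to $s^j_{-i}$, the profile $(r, s^j_{-i})$ with $r \in \arg\max_{t' \in S_i} u_i(t', s^j_{-i})$ is reachable from $s^j$ by one best-response step in $\Gamma$, hence $(r, s^j_{-i}) \in BR_\Gamma(s)$, hence $r \in S'_i$. But then $u_i(r, s^j_{-i}) = \max_{t' \in S_i} u_i(t', s^j_{-i}) \ge u_i(t, s^j_{-i})$, and since $r \in S'_i$ we have $u_i(s^{j+1}_i, s^j_{-i}) \ge u_i(r, s^j_{-i})$ by the fact that $s^{j+1}_i$ is a best response among $S'_i$. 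Chaining these inequalities gives $u_i(s^{j+1}_i, s^j_{-i}) \ge u_i(t, s^j_{-i})$ for every $t \in S_i$, so $s^{j+1}_i$ is a genuine best response in $\Gamma$. Finally $s^{j+1} \in BR_\Gamma(s)$ because it is obtained from $s^j \in BR_\Gamma(s)$ by a best-response step in $\Gamma$, which closes the induction.

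The main obstacle, and the place where the spanning-by-$BR_\Gamma(s)$ hypothesis is doing real work, is exactly the step above: one must notice that the true best response $r$ of player $i$ to $s^j_{-i}$ is itself a strategy appearing in some profile of $BR_\Gamma(s)$ (namely $(r, s^j_{-i})$, one best-response move away from $s^j$), and therefore was not deleted when forming $\Gamma'$. Everything else is bookkeeping. One small point to be careful about: the argument needs $(r,s^j_{-i})\ne s^j$ to be irrelevant — if $r$ already equals $s^j_i$ then trivially $r\in S'_i$ since $s^j_i\in S'_i$, so the claim $r\in S'_i$ holds in either case and no separate treatment of the degenerate case is required.
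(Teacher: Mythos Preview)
Your proof is correct and follows essentially the same approach as the paper's: induction on the path length, with the inductive step hinging on the observation that a $\Gamma$-best-response $r$ (the paper's $\hat{s}^i_j$) to $s^j_{-i}$ must lie in $BR_\Gamma(s^j)\subseteq BR_\Gamma(s)\subseteq\Gamma'$, so that $s^{j+1}_i$, being optimal over $S'_i$, ties with $r$ and is therefore a $\Gamma$-best-response as well. The only minor remark is that your closing ``degenerate case'' $r=s^j_i$ cannot actually occur, since the strict improvement $u_i(s^{j+1}_i,s^j_{-i})>u_i(s^j_i,s^j_{-i})$ already forces $s^j_i\notin\arg\max_{t\in S_i}u_i(t,s^j_{-i})$; this is harmless, just redundant.
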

\begin{proof}
We proceed by induction on the length of the path. The base case is tautological.  Inductively, assume $s,\ldots,s_i$ is a best-response improvement path in
$\Gamma$. The strategy $s^{i+1}$ is a best response to $s^i$ in $\Gamma'$
by some player $j$. This guarantees that $s^i$ is not a best response by
$j$ to $s^i_{-j}$ in $\Gamma'$, let alone in $\Gamma$, so
$\Gamma'\supseteq BR_\Gamma (s)\supseteq BR_\Gamma (s^i)$ must contain
a best-response $\hat{s}^i_j$ to $s^i_{-j}$ in $\Gamma$, and because
$s^{i+1}_{j}$ is a best-response in $\Gamma'$, we are guaranteed
that $u_j(\hat{s}^i_j,s^i_{-j})=u_j(s^{i+1})$, so $s^{i+1}$ must be a
best-response in $\Gamma$.
\end{proof}
We may now prove Theorem~\ref{thm:uniq}.

\begin{proof}[Proof of Theorem~\ref{thm:uniq}]
  To prove Theorem \ref{thm:uniq}, assume that $\Gamma$ is a game
  satisfying the hypotheses of the
  theorem, and for a subgame $\Delta\subseteq \Gamma$, denote by
  $\profile{s}_\Delta$ the unique pure Nash equilibrium in $\Delta$.
  We will proceed by induction up the semilattice of subgames of
  $\Gamma$. The base cases are trivial: any $1\times \cdots \times 1$
  subgame is weakly acyclic for lack of any transitions. Suppose that
  for some subgame $\Gamma'$ of game $\Gamma$ we know that every
  strict subgame $\Gamma'' \subsetneq \Gamma'$ is weakly acyclic.

  Suppose that $\Gamma'$ is not weakly acyclic: it has a state
  $\profile{s}$ from which its unique pure Nash equilibrium
  $\profile{s}_{\Gamma'}$ cannot be reached by best responses.
  Let $\Gamma''$ be the game
  spanned by $BR(\profile{s})$. Consider the cases of
  (i) $\profile{s}_{\Gamma'} \in \Gamma''$ and (ii)
  $\profile{s}_{\Gamma'} \notin \Gamma''$:

  \emph{Case (i)}: $\profile{s}_{\Gamma'} \in \Gamma''$. This requires
  that, for an arbitrary player $j$ with more than 1 strategy in
  $\Gamma'$, there be a best-response improvement path from $s$ to some
  profile $\hat{s}$ where $j$ plays the same strategy as it does in
  $s_{\Gamma''}$. Take one such $j$, and let $\Gamma^j$ be the subgame
  of $\Gamma'$ where $j$ is restricted to playing $\hat{s}_j$
  only. Because $s_{\Gamma'}$ is in $\Gamma^j$, the inductive hypothesis
  guarantees a best-response improvement path in $\Gamma^j$ from $\hat{s}$ to
  $s_{\Gamma'}$. By construction, that path must only involve
  best responses by players other than $j$, who have the same strategy
  options in $\Gamma^j$ as they did in $\Gamma'$, so that path is also
  a best-response improvement path in $\Gamma'$, assuring a best-response improvement path in
  $\Gamma'$ from $s$ to $s_{\Gamma'}$ via $\hat{s}$.

  \emph{Case (ii)}: $\profile{s}_{\Gamma'} \notin \Gamma''$. Then,
  $\Gamma''$'s unique pure equilibrium $\profile{s}_{\Gamma''}$ must
  be distinct from $\profile{s}_{\Gamma'}$. Because
  $\profile{s}_{\Gamma'}$ is the only pure equilibrium in $\Gamma'$,
  $\profile{s}_{\Gamma''}$ must have an outgoing best-response edge to
  some profile $\hat{s}$ in $\Gamma'$. But the inductive hypothesis
  ensures that $\profile{s}_{\Gamma''}\in BR_{\Gamma''}(s)$; by
  Lemma~\ref{lma:br}, $\profile{s}_{\Gamma''}\in BR_{\Gamma'}(s)$, which then
  ensures that $\hat{s}$ must also be in $BR_{\Gamma'}(s)$, and hence
  in $\Gamma''$, so $s_{\Gamma''}$ isn't an equilibrium in
  $\Gamma''$.
\end{proof}

\section{Characterizing the implications of subgame stability}\label{sec:neg}

Yamamori and Takahashi~\cite{YT02} established that in $2$-player games, subgame stability implies weak acyclicity, even under best response, yet this is not true in 3x3x3 games. We now present a complete characterization of when subgame stability is sufficient for weak
acyclicity, as a function of game size and strictness. Our next result shows that the
two-player theorem of \cite{YT02} is maximal:

\begin{theorem}
\label{thm:counterex-ss}
Subgame stability is not sufficient for weak acyclicity even in non-strict $2\times 2\times 2$ games.
\end{theorem}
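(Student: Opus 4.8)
The plan is to prove this by exhibiting a single explicit non-strict $2\times 2\times 2$ game that is subgame stable but whose better-response dynamics graph contains a closed directed cycle consisting entirely of non-equilibrium profiles; such a cycle is a non-trivial sink, so the game is not weakly acyclic even under better response (and hence, by the Claim relating the two notions, not under best response either). I index the eight profiles by triples $xyz\in\{0,1\}^3$, where $x,y,z$ are the strategies of players $1,2,3$. I will arrange for the six profiles forming the $6$-cycle
\[
000 \xrightarrow{1} 100 \xrightarrow{2} 110 \xrightarrow{1} 010 \xrightarrow{3} 011 \xrightarrow{2} 001 \xrightarrow{3} 000
\]
to be a sink of the better-response graph, where the label over each arrow names the player who strictly improves; the unique pure Nash equilibrium of the game will sit at the off-cycle profile $111$, and the eighth profile $101$ will merely feed into the cycle.

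Concretely I would take the payoff triples $u(xyz)=(u_1,u_2,u_3)$ to be $u(000)=(0,1,1)$, $u(100)=(1,0,1)$, $u(110)=(0,1,1)$, $u(010)=(1,1,0)$, $u(011)=(1,0,1)$, $u(001)=(1,1,0)$, $u(101)=(0,0,0)$, $u(111)=(1,1,1)$, and then carry out three bounded checks. First, for each of the six cycle profiles I read off that the player named above has a strict better response leading to the next profile and that \emph{no other deviation is improving}; in particular none of $100,110,011,001$ has an improving deviation toward $101$ or $111$. Thus the six cycle profiles form a set with no out-edges leaving it, none of them is an equilibrium (each has an out-edge), so from, say, $000$ no better-response improvement path reaches a pure Nash equilibrium, which already proves the game is not weakly acyclic. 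Second, I check directly that $u(111)=(1,1,1)$ is a pure Nash equilibrium, so the whole game has one. Third, for subgame stability I note that any subgame in which at least two players are singletons is effectively a one-player game and trivially has an equilibrium, so it suffices to exhibit a pure Nash equilibrium in each of the six two-player ``slices'' obtained by fixing a single player to one of its strategies; one verifies, e.g., that $010$ is an equilibrium of the slice $z=0$, $111$ of $z=1$, $100$ of $y=0$, $011$ of $y=1$, $000$ of $x=0$, and $110$ of $x=1$.

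The real work — and the only obstacle — is finding payoffs satisfying all the constraints simultaneously: the six strict inequalities that force the cycle to advance; four further inequalities saying the escapes from $100,110,011,001$ toward $\{101,111\}$ are \emph{not} improving; a pure Nash equilibrium in each of the six slices (two of which contain $101$ and $111$); and a pure Nash equilibrium in the whole game, which forces it to sit at $101$ or $111$. These demands cannot be met with strict payoffs, which is exactly why the statement is restricted to non-strict games and is consistent with the later fact (Lemma~\ref{lma:222}, Theorem~\ref{thm:sss3p}) that strict subgame-stable $2\times2\times2$ games \emph{are} weakly acyclic; the example must therefore use ties (for instance $u_2(000)=u_2(010)$ and $u_3(110)=u_3(111)$), and these ties are precisely what make several of the slice equilibria, and the equilibrium $111$, ``barely'' stable. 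Once a payoff vector such as the one above is pinned down, the rest is the finite case analysis just sketched.
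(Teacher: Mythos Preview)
Your proposal is correct and takes essentially the same approach as the paper: exhibit an explicit non-strict $2\times 2\times 2$ game that is subgame stable but has a six-profile non-trivial sink in the better-response graph. The paper's concrete example is slightly different (its equilibrium and the remaining off-cycle profile are antipodal corners, whereas yours are adjacent), but your game checks out on all three counts---the cycle advances and has no escapes, $111$ is a pure Nash equilibrium, and each of the six $2\times 2$ slices has an equilibrium---so the argument is complete.
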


\begin{proof}
The non-strict $2\times 2\times 2$ game in Fig.~\ref{fig:counterex-ss222} satisfies subgame stability but is not weakly acyclic---the states other than $(a_0,b_0,c_0)$ and $(a_1,b_1,c_1)$ form a non-trivial sink.
\end{proof}

\begin{figure}[ht]
\begin{center}
\newcommand{\B}{@{\hspace{0.6em}}}
\begin{tabular}{\B c \B | c \B c\B c \B| c \B c \B c \B|}
  \cline{2-7} 
  &
  \multicolumn{3}{|c|}{$c_0$} &
  \multicolumn{3}{|c|}{$c_1$} \\
  \cline{2-7}
  &&
  $b_0$ & $b_1$ &&
  $b_0$ & $b_1$
  \\
  \hline 
  \multicolumn{1}{|c|}{$a_0$} && 2,2,2 & 1,2,2  &&  2,1,2 & 2,2,1 \\
  \multicolumn{1}{|c|}{$a_1$} && 2,2,1 & 2,1,2  &&  1,2,2 & 0,0,0 \\
  \hline
\end{tabular}
\vspace{-0.5em}
\end{center}
\caption{\label{fig:counterex-ss222} A non-strict $2\times 2\times 2$
  subgame-stable game with a non-trivial sink}
\end{figure}

However, if we require the games to be strict, subgame stability turns
out to be somewhat useful in 3-player games:

\begin{theorem}\label{thm:sss3p}
In any strict $2\times 2\times M$ or $2\times 3\times M$ game, subgame
stability implies weak acyclicity, even under best response.
\end{theorem}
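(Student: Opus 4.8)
\emph{Setup and induction scheme.}
The plan is to recast the statement via sink equilibria: under best‑response dynamics a game is weakly acyclic exactly when its best‑response graph has no \emph{non‑trivial} sink, and by the Claim above weak acyclicity under best response also yields ordinary weak acyclicity, so the entire theorem (including its better‑response content) reduces to showing that no strict, subgame‑stable $2\times2\times M$ or $2\times3\times M$ game has a non‑trivial best‑response sink. I would prove this by induction on $M$, the size of player $3$'s strategy set, handling the $2\times2\times M$ family before the $2\times3\times M$ family (since shrinking player $2$'s support in the latter can produce a game of the former type), with Yamamori and Takahashi's two‑player theorem~\cite{YT02} as the ultimate base: whenever a configuration forces some player down to a single strategy, the object of interest lies inside a subgame on the other two players, which is subgame stable and hence weakly acyclic.

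\emph{Support reduction.}
Assume for contradiction that such a $\Gamma$ has a non‑trivial best‑response sink $C$, and let $T_i$ be the set of player $i$'s strategies used by states of $C$. If some $T_i$ is a proper subset of player $i$'s strategies in $\Gamma$, restrict player $i$ to $T_i$. Because $C$ is a sink of $\Gamma$, no best response from a state of $C$ leaves $C$; hence on $C$ the best responses of the restricted game agree with those of $\Gamma$, so $C$ is still a closed, strongly connected, equilibrium‑free set of at least two states in the restricted game, which therefore carries a non‑trivial sink. But the restricted game is again strict and subgame stable and smaller, so it is weakly acyclic by the inductive hypothesis (or by~\cite{YT02} when $|T_i|=1$), a contradiction. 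Hence $C$ has \emph{full support}: $T_1,T_2,T_3$ are the full strategy sets. In particular the subgame spanned by $C$ is $\Gamma$ itself, and subgame stability hands us a pure Nash equilibrium $s^*$ of $\Gamma$, which, having no outgoing best‑response edge, cannot lie in $C$.

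\emph{Strictness bound and the bounded base games.}
Now strictness together with the tiny size of players $1$ and $2$ bites. In a strict game player $3$'s best response to any fixed pair $(t_1,t_2)$ is \emph{unique}; tracing the orbit through $C$ back to the last move by player $3$ shows that every strategy of player $3$ appearing in $C$ is player $3$'s best response to some pair $(t_1,t_2)$ appearing in $C$, so $M=|T_3|\le |T_1|\cdot|T_2|\le 2\cdot 2=4$ (resp. $\le 2\cdot 3=6$). For larger $M$ the full‑support case is vacuous, which leaves only the base games $2\times2\times M$ with $M\le 4$ and $2\times3\times M$ with $M\le 6$. For these I would exploit the refined picture the earlier steps supply: freezing player $3$ at a strategy $c$ gives a strict, subgame‑stable, hence weakly acyclic $2\times2$ (resp. $2\times3$) game $G_c$; each cell $(a_i,b_j)$ has a unique target level $\gamma_{ij}=\mathrm{BR}_3(a_i,b_j)$; and driving players $1,2$ to an equilibrium of the current $G_c$ and then letting player $3$ jump to the matching $\gamma_{ij}$ defines an auxiliary transition graph on cell/level checkpoints whose sinks are exactly the pure Nash equilibria of $\Gamma$. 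A crucial sub‑observation: if either two‑player subgame $\Gamma_{a_i}$ (player $1$ frozen to $a_i$) has a \emph{unique} pure Nash equilibrium, then by weak acyclicity of $\Gamma_{a_i}$ we can steer players $2,3$ to it from any state of $C$ that plays $a_i$, reaching a pure Nash equilibrium of $\Gamma$ inside $C$ — impossible; so both subgames have two equilibria, which, via subgame stability applied to the $2\times2\times2$ subgames they span together with Lemma~\ref{lma:222}, is the configuration I would rule out to finish.

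\emph{Main obstacle.}
I expect this endgame — eliminating escape‑free cycles in the finitely many bounded‑size base games — to be the hard part: the support reduction and the strictness bound are routine, but closing out the base cases means carefully stitching together subgame stability across several overlapping $2\times2$ (or $2\times3$) slices of the game, and it is exactly here that the boundary is delicate, since the companion lower‑bound results show that strict subgame‑stable $2\times4\times4$ and $3\times3\times3$ games can fail to be weakly acyclic. All of the above is carried out for best‑response dynamics, so the better‑response version of the theorem is then immediate from the Claim.
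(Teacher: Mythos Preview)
Your support-reduction step and the strictness bound $M\le |S_1|\cdot|S_2|$ are both sound, and together they do reduce the problem to finitely many full-support base games. But that reduction only postpones the work: everything hinges on those base cases, and your plan for them has a real error. The ``crucial sub-observation'' is wrong as stated: steering players $2$ and $3$ inside $C$ to the unique equilibrium of $\Gamma_{a_i}$ reaches a state where $2$ and $3$ are happy, but this is not a pure Nash equilibrium of $\Gamma$ --- player $1$ can still deviate. The conclusion you want (a global equilibrium landing in $C$) follows only for the slice $a_i=s^*_1$ that contains the global equilibrium $s^*$, since then the unique equilibrium of $\Gamma_{a_i}$ must be $(s^*_2,s^*_3)$; for the other slice you get no contradiction, so you cannot deduce that both $\Gamma_{a_0}$ and $\Gamma_{a_1}$ carry two equilibria, and the proposed endgame via Lemma~\ref{lma:222} never starts. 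Even granting the sub-observation, ``rule out'' is doing all the work and is left entirely open.

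The paper does not bound $M$. It runs a uniform induction on $M$ whose inductive step carries the real content, driven by an ingredient your sketch is missing: Lemma~\ref{lemma:nonashnbr}, which in a strict game excludes from any non-trivial sink both a pure Nash equilibrium and every profile one move away from it. For $2\times 2\times M$ this, combined with Lemma~\ref{lemma:sinkdoesntfit} (your support reduction, essentially), forces the only possible $c_0$-state in the sink to be $(a_1,b_1,c_0)$, whose only admissible sink-edges are by player $3$ --- impossible for a node in a non-trivial strongly connected set. The $2\times 3\times M$ step is a long case analysis that applies Lemma~\ref{lemma:nonashnbr} twice, once around the global equilibrium $(a_0,b_0,c_0)$ and once around the equilibrium of the $2\times 2\times M$ subgame $\Gamma_{-b_0}$; this second forbidden neighborhood is precisely what closes the argument, and nothing in your outline produces it.
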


\onlyshort{The proof of the theorem rests on
several technical lemmas:}

\onlylong{We will first need a couple of technical lemmas:}

\begin{lemma}
\label{lemma:nonashnbr}
In strict games, neither a pure Nash equilibrium nor strategy profiles
differing from it in only one player's action can be part of a
non-trivial sink of the best-response dynamics.
\end{lemma}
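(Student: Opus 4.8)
The plan is to use the fact that, in the best-response dynamics graph, a pure Nash equilibrium is a vertex with no outgoing edges, and then to show that in a strict game every one-player deviation from it has an edge pointing straight back to it.

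First I would dispatch the pure-Nash part. Let $s^*$ be a pure Nash equilibrium of $\Gamma$. Since every player already plays a best response at $s^*$, and a best-response improvement edge requires a \emph{strict} increase in some player's utility, $s^*$ has no outgoing edge in the best-response dynamics graph. Consequently, any set of states that is both closed under the dynamics and strongly connected and contains $s^*$ must equal the singleton $\{s^*\}$: a larger strongly connected set would need a best-response path leaving $s^*$, which does not exist. Hence $s^*$ lies in no non-trivial sink.

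Next, consider a profile $t$ that differs from $s^*$ in exactly one player's action, say $t=(t_j,s^*_{-j})$ with $t_j\neq s^*_j$. Here strictness does the work. Because $\Gamma$ is strict, player $j$ cannot have two distinct best responses to the fixed profile $s^*_{-j}$: two distinct best responses would yield two profiles differing only in $j$'s action but with equal $u_j$-value, contradicting strictness. Since $s^*$ is a pure Nash equilibrium, $s^*_j$ is one best response, so it is \emph{the} best response to $s^*_{-j}$. Strictness also gives $u_j(s^*_j,s^*_{-j})>u_j(t_j,s^*_{-j})$, so switching $j$ from $t_j$ to $s^*_j$ is a strictly improving best response. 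Therefore there is a best-response edge $t\to s^*$ in the dynamics graph.

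Finally I would combine the two observations. If $t$ belonged to a non-trivial sink $Q$, then, as $Q$ is closed under the dynamics and $t\to s^*$ is an edge, we would have $s^*\in Q$; but $Q$ is strongly connected with more than one element, so there would be a best-response path from $s^*$ back to $t\neq s^*$, contradicting the absence of outgoing edges at $s^*$. Hence no such $t$ lies in a non-trivial sink. The only mildly delicate point is the uniqueness-of-best-response step, and that is immediate from the definition of a strict game; everything else is bookkeeping with the definition of a sink.
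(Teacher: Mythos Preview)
Your proof is correct and follows essentially the same approach as the paper's own argument: the equilibrium has no outgoing edges, and strictness forces each one-player deviation to have a best-response edge back to the equilibrium, so neither can sit inside a non-trivial (strongly connected, closed) sink. You are simply more explicit than the paper about why strictness makes $s^*_j$ the \emph{unique} best response to $s^*_{-j}$ and about the sink bookkeeping, but the underlying idea is identical.
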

\begin{proof}
A pure Nash equilibrium always forms a 1-node sink. If the game is
strict, profiles differing by one player's action have to give that
one player a strictly lower payoff, requiring a best-response
transition to the equilibrium's sink. Any node connected to either
cannot be in a sink.
\end{proof}

\begin{lemma}
\label{lemma:sinkdoesntfit}
The profiles of a game that constitute a non-trivial sink of the
best-response dynamics cannot be all contained within a subgame
that is weakly acyclic under best-response.
\end{lemma}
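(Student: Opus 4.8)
The plan is to argue by contradiction. Suppose $\Gamma$ has a non-trivial sink $K$ of the best-response dynamics (so $K$ is strongly connected, closed under best-response edges, and $|K|\geq 2$), and suppose $\Gamma'$ is a subgame of $\Gamma$ whose set of strategy profiles contains all of $K$ and which is weakly acyclic under best response. The heart of the proof is to show that $K$ is \emph{also} a non-trivial sink of the best-response dynamics of $\Gamma'$; once that is established, weak acyclicity of $\Gamma'$ forces a best-response path inside $K$ to a pure Nash equilibrium of $\Gamma'$, while strong connectivity of $K$ gives every profile of $K$ an outgoing best-response edge in $\Gamma'$, which is the desired contradiction.

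First I would record the ``easy direction'' of how best-response edges pass to a subgame: if $s$ and $(s'_j,s_{-j})$ both lie in $\Gamma'$ and $s\to(s'_j,s_{-j})$ is a best-response edge of $\Gamma$, then it is also a best-response edge of $\Gamma'$, since a strategy maximizing $u_j(\cdot,s_{-j})$ over all of $S_j$ in particular maximizes it over $S'_j\subseteq S_j$, and the strict-improvement condition is unaffected. Applying this to the internal edges of $K$ shows that $K$ is strongly connected in the best-response graph of $\Gamma'$, and hence that every profile of $K$ has an outgoing best-response edge in $\Gamma'$ and so is not a pure Nash equilibrium of $\Gamma'$.

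The main step, and the one I expect to be the real obstacle, is the converse-flavored claim that $K$ is \emph{closed} under best response in $\Gamma'$: no best-response edge of $\Gamma'$ leaves $K$. The subtlety is that best-response edges of $\Gamma'$ are a priori more permissive than those of $\Gamma$ (a strategy beaten only by something outside $\Gamma'$ can look optimal inside $\Gamma'$), so one cannot simply invoke that $K$ is a sink of $\Gamma$. To handle this, fix $s\in K$, a player $j$, and a best-response improvement edge $s\to(\hat{s}_j,s_{-j})$ in $\Gamma'$, so $\hat{s}_j$ maximizes $u_j(\cdot,s_{-j})$ over $S'_j$ and $u_j(\hat{s}_j,s_{-j})>u_j(s_j,s_{-j})$. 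Let $s^*_j$ maximize $u_j(\cdot,s_{-j})$ over the full set $S_j$; then $u_j(s^*_j,s_{-j})\geq u_j(\hat{s}_j,s_{-j})>u_j(s_j,s_{-j})$, so $s\to(s^*_j,s_{-j})$ is a best-response edge of $\Gamma$, whence $(s^*_j,s_{-j})\in K$ because $K$ is a sink of $\Gamma$; in particular $s^*_j$ is one of player $j$'s strategies used in $K$, so $s^*_j\in S'_j$. But then $\hat{s}_j$, being optimal over $S'_j$, ties $s^*_j$ in payoff, hence $\hat{s}_j$ is itself a best response to $s_{-j}$ in $\Gamma$ with strictly higher payoff than $s_j$, so $s\to(\hat{s}_j,s_{-j})$ is a best-response edge of $\Gamma$, and closedness of $K$ in $\Gamma$ gives $(\hat{s}_j,s_{-j})\in K$. (This uses no strictness assumption: ties are permitted throughout.)

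Finally I would assemble the pieces: $K$ is strongly connected, closed under best response in $\Gamma'$, and of size at least two, i.e., a non-trivial sink of $\Gamma'$. Since $\Gamma'$ is weakly acyclic under best response, from some $s\in K$ there is a best-response improvement path in $\Gamma'$ ending at a pure Nash equilibrium $s^*$ of $\Gamma'$; by closedness the entire path, and in particular $s^*$, remains in $K$. But by the second paragraph $s^*$ has an outgoing best-response edge in $\Gamma'$, contradicting that $s^*$ is a pure Nash equilibrium of $\Gamma'$. This contradiction proves the lemma.
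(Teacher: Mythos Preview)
Your proof is correct and rests on the same key observation as the paper's: from a sink profile, any $\Gamma'$-best-response improvement by player $j$ forces a $\Gamma$-best-response by $j$ whose target lies in the sink and hence in $\Gamma'$, which in turn (by the tie) makes the original $\Gamma'$-move itself a $\Gamma$-best-response. You package this as the invariant ``$K$ is closed under $\Gamma'$-best-response edges,'' then combine it with the easy fact (from your Step~1) that $K$ contains no $\Gamma'$-equilibrium to contradict weak acyclicity of $\Gamma'$. The paper instead follows a single $\Gamma'$-best-response path from a sink profile toward a $\Gamma'$-equilibrium and argues at the first step where the path would leave the sink (or at the equilibrium itself, if the whole path stays in the sink), exhibiting there a $\Gamma$-best-response edge whose target lies outside $\Gamma'$. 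The two arguments are logically equivalent reorganizations of one another; yours is slightly more structural and arguably cleaner, the paper's slightly more procedural.
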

\onlyshort{\begin{proof}[sketch] The lemma comes from considering, for a sink
of $\Gamma$ contained in a weakly acyclic subgame $\Gamma'$, a best
response path in $\Gamma'$ from the sink to an equilibrium. The first
transition on that path that is not a best response in $\Gamma$ (or,
in absence of such, the transition from the equilibrium of $\Gamma'$
that makes it a non-equilibrium in $\Gamma$), will have to lead out of
$\Gamma'$ but remain in the sink.
\end{proof}}
\onlylong{\begin{proof}
Consider such a non-trivial sink of game $\Gamma$ contained in such a
subgame $\Gamma'$. Take a profile $s$ in the sink, and consider the
path $P=\{s=s^0,s^1,\ldots,s^k\}$ of $\Gamma'$ best
responses that leads to $s^k$, an equilibrium of $\Gamma'$. This path
is guaranteed to exist because $\Gamma'$ is weakly acyclic under best
response. Consider the last profile in $P$, $s^a$, such that all profiles on
$P$ between $s$ and $s^a$ are in the sink.

If $s^a=s^k$ (i.e., if $P$ is entirely in the sink), there has to
be a best response transition in $\Gamma$ from $s^k$ to some
$s'$, because $s^k$ cannot be an equilibrium of $\Gamma$ and be
in a non-trivial sink. If $s'$ were in $\Gamma'$, the transition from
$s^k$ to $s'$ would have been a best response in $\Gamma'$,
too, contradicting $s^k$ being an equilibrium of $\Gamma'$---thus, $s'$ is not in $\Gamma'$, but is in the sink.

If $s^a\neq s^k$, the transition from $s^a$ to $s^{a+1}$, by some
player $i$, is a best response in $\Gamma'$, but not in $\Gamma$. So
$s^a_i$ is not $i$'s best response to $s^a_{-i}$, and thus there is a
best response by $i$ from $s^a$ to some $s'$ in $\Gamma$. Because $s^a$
to $s^{a+1}$ is \emph{not} a best response transition by in $\Gamma$,
$u_i(s')>u_i(s^{a+1})$, and because $s^{a+1}$ \emph{is} a best response
in $\Gamma'$, $s'$ must not be in $\Gamma'$---but because $s^a$ is in
the sink, so is $s'\notin \Gamma'$.
\end{proof}}

\onlyshort{We then consider the corner cases of 3-player, $2\times 2\times 2$
strict games, and 2-player, $2\times m$ games, where weak acyclicity requires
even less than subgame stability. The former result forms the base
case for Theorem \ref{thm:sss3p}, and both might also be of
independent interest.}

\onlylong{We now start with the corner cases of 3-player, $2\times 2\times 2$
games, and 2-player, $2\times m$ games, where weak acyclicity requires
even less than subgame stability. The former result forms the base
case for Theorem \ref{thm:sss3p}, and both might also be of
independent interest.}

\begin{lemma}
\label{lma:2m}
In any $2\times m$ game, if there is a pure Nash equilibrium, then the
game is weakly acyclic, even under best response.
\end{lemma}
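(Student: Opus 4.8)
The plan is to fix one pure Nash equilibrium and, after relabeling player $1$'s two strategies, assume it has the form $(a_0,b_{j^\ast})$, so that $a_0$ is a best response to $b_{j^\ast}$ and $b_{j^\ast}$ lies in $\mathrm{argmax}_j u_2(a_0,b_j)$. I then partition the columns into $C_0$, the set of columns against which $a_0$ is a best response for player $1$, and $C_1$, the set of columns against which $a_1$ is a best response; since one of the two rows is always a best response, $C_0\cup C_1$ is the full set of columns (a column at which player $1$ is indifferent lies in both). From each profile $(a_i,b_j)$ I will exhibit an explicit best-response improvement path to a pure Nash equilibrium by a short case analysis governed by whether $i=0$ or $i=1$ and whether $j\in C_0$ or $j\in C_1$.

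The four cases are: \textbf{(1)} if $i=0$ and $j\in C_0$, then either $(a_0,b_j)$ is already an equilibrium, or player $2$ has a strict best response to $a_0$ and we move to $(a_0,b_{j^\ast})$, which is an equilibrium; \textbf{(2)} if $i=0$ and $j\in C_1\setminus C_0$, player $1$ strictly improves by switching to $a_1$, landing at $(a_1,b_j)$ with $j\in C_1$, i.e.\ the hypothesis of case (4); \textbf{(3)} if $i=1$ and $j\in C_0\setminus C_1$, player $1$ strictly improves by switching to $a_0$, reducing to case (1); \textbf{(4)} if $i=1$ and $j\in C_1$, then either $b_j$ is a best response to $a_1$ and $(a_1,b_j)$ is an equilibrium, or player $2$ strictly improves to some $b_{j'}$ maximizing $u_2(a_1,\cdot)$, after which either $j'\in C_1$ and $(a_1,b_{j'})$ is an equilibrium, or $j'\in C_0\setminus C_1$ and player $1$ switches to $a_0$, reducing to case (1). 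Chaining these reductions — case (2) feeds case (4), cases (3) and (4) feed case (1), and case (1) terminates — yields, from any starting profile, a best-response improvement path of length at most four ending at a pure Nash equilibrium, which is exactly weak acyclicity under best response.

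The content is in the bookkeeping rather than in any conceptual difficulty. At each step one must check that the move is a \emph{strict} improvement, so that it is a legal edge of a best-response improvement path rather than merely a move to an equally good best response; that the split on $C_0\cup C_1$ is exhaustive; and that each profile claimed to be an equilibrium really is one (which always reduces to the fact that the relevant row lies in the appropriate $C_i$ together with the defining property of $j^\ast$ or of the newly chosen column). The one slightly delicate point, and the main thing to get right, is that columns which are best responses for player $1$ against \emph{both} rows must be routed through case (1) or case (4), not cases (2) or (3); apart from that, since the only resources consumed are player $1$'s single binary flip and at most two best-response moves by player $2$, there is no room for the constructed path to cycle, so its finiteness is automatic.
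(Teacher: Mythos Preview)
Your proof is correct, and it takes a genuinely different route from the paper's.  The paper argues by contradiction: it assumes a non-trivial best-response sink exists, observes that such a sink must contain some state $(s^*,t')$ in the same row as the equilibrium $(s^*,t^*)$, and then derives a contradiction by a three-way case split on whether $t'$ is a best response to $s^*$ and whether $s^*$ is a best response to $t'$ (in the last case using an inbound edge to $(s^*,t')$ inside the sink).  You instead build an explicit best-response improvement path from every state, via the $C_0$/$C_1$ column partition and your four cases.  Your approach is slightly longer to write out but more informative: it yields the quantitative bound that every state reaches a pure Nash equilibrium in at most four best-response moves, which the paper's sink argument does not give.  The paper's argument, on the other hand, avoids tracking the chain of case reductions and the bookkeeping about $C_0\cap C_1$ that you correctly flag as the one delicate point.
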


\onlylong{\begin{proof}
In general $2\times m$ games with pure Nash equilibrium $(s^*,t^*)$, a
non-trivial best-response sink cannot consist of moves by just one
player.  Thus, the first player will play both of his strategies somewhere
in such a sink, including $s^*$, so there is some $(s^*,t')$ state in
the sink. If $t'$ is not a best response to $s^*$, there would be a
best-response transition to the equilibrium $(s^*,t^*)$, which
couldn't happen in a sink. If $t'$ is a best response to $s^*$, and
$s^*$ is a best response to $t'$, then $(s^*,t')$ is a Nash
equilibrium, which couldn't happen in a sink. Lastly, if $t'$ is a
best response to $s^*$, but $s^*$ is not a best response to $t'$,
there has to be some inbound best-response transition into $(s^*,t')$
from another profile in the sink, and that transition then has to
involve a move by player 2, from some other state $(s^*,t'')$,
guaranteeing that $t''$ is not a best response to $s^*$. Because $t^*$
has to also be a best response to $s^*$, there is then a best response
transition from $(s^*,t'')$ to the equilibrium $(s^*,t^*)$, concluding
the proof.
\end{proof}}

\begin{lemma}
\label{lma:222}
In any strict $2\times 2\times 2$ game,
if there is a pure Nash equilibrium, the game is weakly acyclic, even
under best response.
\end{lemma}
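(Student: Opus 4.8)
The plan is to show that the best-response dynamics graph of such a game has no non-trivial sink equilibrium, which (by the equivalence recalled in Section~\ref{sec:defs}) gives weak acyclicity; since every player has only two strategies, better- and best-responses coincide here, so weak acyclicity under best response follows too. Picture the eight strategy profiles as the vertices of the cube $Q_3$, with an edge joining two profiles that differ in exactly one player's strategy. Strictness means $u_i$ takes distinct values at the two endpoints of such an edge for the player $i$ who deviates, so exactly one of the two profiles is a strict improvement for $i$; hence the best-response dynamics graph is an \emph{orientation} of $Q_3$. Let $v_0$ be a pure Nash equilibrium (one exists by hypothesis); all three of its incident edges point into $v_0$, and we may relabel strategies so that $v_0 = (a_0,b_0,c_0)$.

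Suppose toward a contradiction that $K$ is a non-trivial sink of the best-response dynamics, i.e., a strongly connected set of at least two profiles with no outgoing edge. By Lemma~\ref{lemma:nonashnbr}, neither $v_0$ nor any of its three cube-neighbors lies in $K$; this rules out four of the eight vertices, leaving only the antipode $v_1 = (a_1,b_1,c_1)$ of $v_0$ together with $v_1$'s three neighbors $(a_1,b_1,c_0)$, $(a_1,b_0,c_1)$, $(a_0,b_1,c_1)$ (equivalently, the three profiles at Hamming distance $2$ from $v_0$). The subgraph of $Q_3$ induced on these four vertices is a star centered at $v_1$: the three ``leaves'' are pairwise non-adjacent. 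Consequently $K$ must contain $v_1$ (a nonempty set of leaves alone spans no edges, so could not be strongly connected with two or more elements), and $K$ must also contain at least one leaf.

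The crux, and the place where strictness does real work beyond Lemma~\ref{lemma:nonashnbr}, is the following observation: if a leaf $\ell\in K$ and the edge $\{v_1,\ell\}$ is oriented from $v_1$ to $\ell$, then $\ell$ is a pure Nash equilibrium, contradicting $\ell\in K$. Indeed, $\ell$'s only cube-neighbors are $v_1$ and two neighbors of $v_0$, and the latter two lie outside $K$; if the edge to $v_1$ points into $\ell$, then any outgoing edge of $\ell$ would leave $K$, which a sink forbids, so $\ell$ has no outgoing edge and is a pure Nash equilibrium, which cannot sit inside a non-trivial sink. Applying this: pick a leaf $\ell_1\in K$; the edge $\{v_1,\ell_1\}$ must then be oriented toward $v_1$. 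But $v_1\in K$ cannot be a pure Nash equilibrium either (it is in a non-trivial sink), so it has an outgoing edge, necessarily to some leaf $\ell_2$, and since the edge $\{v_1,\ell_1\}$ points into $v_1$ we have $\ell_2\neq\ell_1$. As $K$ has no outgoing edge, $\ell_2\in K$, and the edge $\{v_1,\ell_2\}$ points from $v_1$ to $\ell_2$ --- contradicting the observation. Hence no non-trivial sink exists.

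I expect the only delicate part to be the bookkeeping: checking that Lemma~\ref{lemma:nonashnbr} eliminates exactly $v_0$ and its three neighbors, and that the four surviving vertices induce a star, so that every escape route of a leaf within the cube lands on a neighbor of $v_0$ and hence outside $K$. Everything after that is forced. It is worth remarking that strictness is genuinely needed and not a mere convenience: Theorem~\ref{thm:counterex-ss} gives a non-strict $2\times 2\times 2$ subgame-stable game with a non-trivial sink, and in that example a neighbor of an equilibrium does lie in the sink --- precisely the configuration that Lemma~\ref{lemma:nonashnbr} excludes under strictness.
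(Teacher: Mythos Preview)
Your proof is correct and follows essentially the same approach as the paper: apply Lemma~\ref{lemma:nonashnbr} to exclude $v_0$ and its three neighbors, observe that the remaining four profiles induce a star in $Q_3$, and then use strictness (antisymmetry of best-response edges) to rule out a non-trivial sink living inside that star. The paper compresses your final step into one sentence---an orientation of a star contains no directed cycle, hence no strongly connected set of size $\geq 2$---whereas you unpack the same conclusion via the leaf/center case analysis; the content is the same.
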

\begin{proof}
In strict $2\times 2\times 2$ games, Lemma \ref{lemma:nonashnbr}
leaves 4 other strategy profiles, with the possible best-response
transitions forming a star in the underlying undirected graph. Because
best-response links are antisymmetric ($s\to s'$ and $s'\to s$ cannot
both be best-response moves if they differ only in a single player's action), there can be no cycle among those 4
profiles, and thus no non-trivial sink components.
\end{proof}

\onlyshort{\begin{proof}[sketch of Theorem \ref{thm:sss3p}] The full
proof is long and technical, and is relegated to the full version of
the paper.

We treat the $2\times 2\times M$ case first. Naming the equilibrium
of the game $(a_0,b_0,c_0)$, Lemmas \ref{lemma:nonashnbr} and
\ref{lemma:sinkdoesntfit} guarantee that the sink must contain a
profile where player 3 plays $c_0$, yet the only such profile that can
be in the sink is $(a_1,b_1,c_0)$, the total degree of which in the
best-response directed graph is at most 1 (also by Lemma
\ref{lemma:nonashnbr}), which cannot happen for a node in a non-trivial
sink.

The $2\times 3\times M$ case is much more complex. The proof operates
inductively on $M$. From the inductive hypothesis, the $2\times
2\times M$ result, and Lemma \ref{lemma:sinkdoesntfit}, we get that
the \emph{smallest} $2\times 3\times M$ game $\Gamma$ that is not weakly
acyclic under best response must have a non-trivial sink spanning
$\Gamma$. Given such a sink, we then use Lemma \ref{lemma:nonashnbr}
and a similar result that excludes from the non-trivial sink any
profiles adjacent to the equilibrium of the $2\times 2\times M$
subgame that does not contain the global equilibrium. The proof
concludes by a detailed examination of the possible structures of such
a sink under all those constraints, which yield a contradiction in
every case.
\end{proof}}

\onlylong{\begin{proof}[Proof of Theorem~\ref{thm:sss3p}]
We treat the $2\times 2\times M$ and $2\times 3\times M$ cases
separately.

\noindent{\bf The $2\times 2\times M$ case}: With Lemma \ref{lma:222} as the
base case, assume, inductively, that the $2\times 2\times M$ claim
holds for all values of $M$ through some $M'-1$, and suppose some
$2\times 2\times M'$ game $\Gamma$, with strategy sets
$\{a_{0,1}\}$ (i.e., the set containing the strategies $a_0$ and $a_1$), $\{b_{0,1}\}$, and $\{c_{0,\ldots,M'-1}\}$, has a non-trivial
best-response sink $X$. Without loss of generality, let $(a_0,b_0,c_0)$ be an equilibrium of
$\Gamma$.

Lemma \ref{lemma:sinkdoesntfit} guarantees that $X$
is not contained in the subgame $\Gamma_{-c_0}$, where only strategy
$c_0$ is removed, leaving a strict, subgame stable $2\times 2\times
M'-1$ game, which is weakly acyclic under best response by the
inductive hypothesis. But the only profile using $c_0$ that is allowed
to be in $X$ after applying Lemma \ref{lemma:nonashnbr} is
$(a_1,b_1,c_0)$, from which the same lemma guarantees that only player
3 can make a best-response transition in $X$. Thus, it can have no inbound
best-response transitions by player 3, leaving no way for it to be
reached from the rest of $X$, which can thus not be a sink.

\noindent{\bf The $2\times 3\times M$ case}: The $2\times 3\times 2$ case is
isomorphic to the above. With that as the base case, assume,
inductively, that the $2\times 3\times M$ claim holds for all values
of $M$ up to some $M'-1$, and suppose that a $2\times 3\times M'$ game
$\Gamma$, with strategy sets $\{a_{0,1}\}$,$\{b_{0,1,2}\}$,and
$\{c_{0,\ldots,M'-1}\}$ has a non-trivial sink $X$. Without loss of generality, let
$(a_0,b_0,c_0)$ be a pure Nash equilibrium of $\Gamma$. The inductive
hypothesis and Lemma \ref{lemma:sinkdoesntfit} guarantee that $X$
spans $\Gamma$, and, in particular, that it has at least one node of form
$(*,*,c_0)$, and at least one of form $(*,b_0,*)$.

By Lemma \ref{lemma:nonashnbr}, the $(*,*,c_0)$ node has to be one of
the two nodes $(a_1,b_{1,2},c_0)$, and that node cannot have an
outbound best response by player 1. To be in a non-trivial sink, it
has to have an inbound and an outbound best response, one of which is
thus by player 3, and the other by player 2, ensuring that \emph{both}
of the two nodes $(a_1,b_{1,2},c_0)$ are in $X$. One of those will
then be player 2's best response to $(a_1,c_0)$; without loss of generality, let that one be
$(a_1,b_1,c_0)$. Then, the only inbound best response to lead to
$(a_1,b_2,c_0)$ is by player 3, and player 3 has to have an outbound
best response from $(a_1,b_1,c_0)$ to some $(a_1,b_1,c_x)$.

From $(a_1,b_1,c_x)$, if there is an outbound best response by player
2, it cannot be to $(a_1,b_2,c_x)$: otherwise, the subgame with strategies
$\{a_1\}$, $\{b_{1,2}\}$, and $\{c_{0,x}\}$ is isomorphic to Matching
Pennies. Player 2's best response would thus have to instead be to
$(a_1,b_0,c_x)$. From there, player 1 cannot have an outbound best
response by Lemma \ref{lemma:nonashnbr}, thus requiring player 3 to
have a best response to some $(a_1,b_0,c_y)$; from there, too, player
1 cannot have a best response by Lemma \ref{lemma:nonashnbr},
requiring a best response by player 2. But then, in the $1\times
3\times M'$ subgame formed by removing strategy $a_0$, for each of
player 2's strategies, player 3's best response is to a profile that
has an outbound best response by player 2, which precludes an
equilibrium.

Thus, from $(a_1,b_1,c_x)$, the sole possible outbound best response is by
player 1, to $(a_0,b_1,c_x)$.

\newcommand{\GB}{\ensuremath{\Gamma_{-b_0}}}
\newcommand{\NB}{\ensuremath{s^*}}

Consider now the $2\times 2\times M'$ subgame $\GB$ formed by taking
away strategy $b_0$, and let $\NB$ be its pure Nash equilibrium. If
$\NB$ is of form $(a_0,b_{1,2},c_0)$, that would require that it be
player 1's best response in $\Gamma$ to $(b_{1,2},c_0)$, thus putting
$\NB$ in the sink, in violation of Lemma \ref{lemma:nonashnbr}. The pure Nash equilibrium $\NB$
also cannot be of form $(a_1,b_{1,2},c_0)$: otherwise, it is in the
sink, and yet the only outbound best responses in $\Gamma$ must be
those not in $\Gamma'$, i.e., by player 2 to $(a_1,b_0,c_0)$, in
violation of Lemma \ref{lemma:nonashnbr}.  Thus, $\NB$ has player 3 playing a strategy other
than $c_0$, which is its best response to one of
$(a_1,b_{1,2})$. Player 3's best response to $(a_1,b_2)$ is
$c_0$, so that cannot be $\NB$. Player 3's best response to
$(a_1,b_1)$ is $c_x$, but there is an outbound best response by player
1 to $(a_0,b_1,c_x)$ from there, which is within $\GB$. Thus,
$\NB=(a_0,b_y,c_z)$, for $y\in \{1,2\}$.

Then, $\NB$ again cannot be in $X$, because the sole outbound transition
in $\Gamma$ could only be to $(a_0,b_0,*)$, violating Lemma
\ref{lemma:nonashnbr}. Neither $(a_1,b_y,c_z)$ nor any $(a_0,b_y,*)$
profile can be in $X$, either, because their best-response transition to
$\NB$ in $\Gamma'$ would also be a best response in $\Gamma$, putting
$\NB$ into $X$. If the one $(a_0,b_v,c_z)$ profile with $0\neq v\neq
y$ were in $X$, it has a best response to $\NB$ in $\Gamma'$, so it
has to have a best response by player 2 in $\Gamma$---but it cannot
be to $(a_0,b_0,c_z)$ by Lemma \ref{lemma:nonashnbr}, and cannot be to
$\NB$ because $\NB$ cannot be in $X$. Thus, much like in Lemma 2, no
profile differing in at most one player's strategy from $\NB$ can be
in $X$, either.

We can now show that nodes $(a_1,b_{0,v},c_z)$ are both in $X$, by an
argument symmetrical to that for $(a_1,b_{1,2},c_0)$.  The same
argument will yield that either $b_v$ or $b_0$ is the best response to
$(a_1,c_z)$, and that the other one of the two is a best response by
player 3. We finish by analyzing the cartesian product of those two
cases, and whether $v\in \{1,2\}$:

\textbf{Case:} $v=1$, $b_v$ is best response. The above argument will require that
the best response to $(a_1,b_1)$ by player 3, $c_x$, be neither $c_0$
nor $c_z$. If the outbound best response from $(a_1,b_1,c_x)$ is by
player 2, to $b_0$ or $b_2$, then either $\{a_1\}\times \{b_{0,1}\}
\times \{c_{z,x}\}$ or $\{a_1\} \times \{b_{1,2}\} \times
\{c_{0,x}\}$, respectively, form a subgame isomorphic to matching
pennies. On the other hand, suppose the outbound best response from
$(a_1,b_1,c_x)$ is by player 1, to $(a_0,b_1,c_x)$. Since
$(a_0,b_0,*)$ nodes and $(a_0,b_2,*)$ nodes may not be in $X$, the
only outbound response from there is by player 3, to some $c_w$, from
which the only outbound best response is by player 1 to
$(a_1,b_1,c_w)$, creating a matching pennies subgame with strategies
$\{a_{0,1}\} \times \{b_1\} \times \{c_{w,x}\}$.

\textbf{Case:} $v=1$, $b_0$ is best response. In this case, $c_z$ has to be the best response
to $(a_1,b_1)$ by player 3, requiring that $c_x=c_z$, but it was
established above that $(a_1,b_1,c_x)$ cannot have an outbound best
response by player 2.

\textbf{Case:} $v=2$, $b_v$ is best response. An argument symmetric to the $v=1$,
$b=0$ case will show that $c_0$, the requisite best response to
$(a_1,b_1)$ cannot have an outbound best response by player 2.

\textbf{Case:} $v=2$, $b_0$ is best response. This gives a contradiction, because it would require
both $c_z$ and $c_0$ to be the best response to $(a_1,b_2)$.

Thus, $\Gamma$ cannot have a non-trivial best-response sink.
\end{proof}}

Theorem~\ref{thm:sss3p} is maximal. All bigger sizes of 3-player games
admit subgame-stable 
examples that are not weakly acyclic:

\begin{theorem}
\label{thm:counterex3p}
In non-degenerate\footnote{Each player has 2 or more strategies}
strict 3-player games, the existence of pure Nash equilibria in every
subgame is insufficient to guarantee weak acyclicity, for any game
with at least 3 strategies for each player, and any game with at least
4 strategies for 2 of the players.
\end{theorem}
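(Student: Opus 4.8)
The plan is to prove Theorem~\ref{thm:counterex3p} by exhibiting explicit strict, subgame-stable, non-weakly-acyclic games at the two \emph{smallest} sizes named in the statement---a $3\times3\times3$ game for the family ``at least $3$ strategies per player'' and a $2\times4\times4$ game for the family ``at least $4$ strategies for two of the players''---and then transferring each example to all larger sizes by a padding construction. Up to relabeling the players, every size covered by the statement is reached from one of these two bases by repeatedly adding one strategy to a single player: sizes with every player having $\ge3$ strategies come from $3\times3\times3$, and sizes in which exactly one player has two strategies while the other two have $\ge4$ come from $2\times4\times4$ (with the two-strategy player relabeled last).

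\textbf{The two base examples.} For the first family it is enough to produce one strict $3\times3\times3$ game that is subgame stable but not weakly acyclic even under better response; I would use the Yamamori--Takahashi counterexample~\cite{YT02} directly if it is strict, and otherwise exhibit a strict game of the same size with the same qualitative behaviour (obtained either by a careful generic perturbation of their example, preserving the non-trivial sink and a pure Nash equilibrium in every subgame, or from scratch). For the second family I would construct an explicit strict $2\times4\times4$ game $\Gamma^\star$, displayed as a payoff table in the style of Figure~\ref{fig:counterex-ss222}, together with an explicit non-trivial better-/best-response sink $X$ whose escape-freeness one verifies directly from the table. The substantive part is checking that $\Gamma^\star$ is subgame stable, i.e.\ that each of its (finitely many) subgames has a pure Nash equilibrium; I would do this by a structural case analysis, exploiting dominance relations deliberately built into $\Gamma^\star$ so that in most subgames a designated profile is readily seen to be an equilibrium, falling back to direct inspection for the remaining small subgames.

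\textbf{Padding lemma.} I would then prove: if $\Gamma$ is strict and subgame stable and its better-response (resp.\ best-response) dynamics has a non-trivial sink, and $\Gamma^+$ is obtained from $\Gamma$ by giving some player $i$ one new strategy $t$ with $u_i(t,s_{-i})$ chosen, generically, strictly below $u_i(s_i',s_{-i})$ for every $s_i'\in S_i$ and every $s_{-i}$, and with $u_j(t,s_{-i}) = u_j(s_i^0,s_{-i})$ for all $j\ne i$ and a fixed $s_i^0\in S_i$, then $\Gamma^+$ is again strict and subgame stable and inherits a non-trivial sink. Strictness is immediate: the new column for $i$ is generic, and cloning $s_i^0$'s payoffs for the other players only equates payoffs at profiles that differ in $i$'s move, which strictness does not constrain. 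Subgame stability splits into cases for a subgame $\Gamma'$ of $\Gamma^+$: if $i$'s strategy set in $\Gamma'$ omits $t$ then $\Gamma'$ is already a subgame of $\Gamma$; if it contains $t$ and another strategy of $i$, a pure Nash equilibrium of the $\Gamma$-subgame obtained by deleting $t$ survives in $\Gamma'$ because $t$ is strictly dominated there; and if $i$'s set is exactly $\{t\}$, a pure Nash equilibrium of the $\Gamma$-subgame in which $i$ is pinned to $s_i^0$ transfers to $\Gamma'$, since $t$ clones $s_i^0$ for all $j\ne i$ and $i$ has no alternative. Finally, because $t$ is the unique worst strategy of $i$ in every column, no old profile gains a new better- or best-response move while every profile using $t$ has such a move back into the old state space, so the induced dynamics on the old profiles is unchanged and the old non-trivial sink persists in $\Gamma^+$; in particular $\Gamma^+$ is not weakly acyclic, even under better response.

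\textbf{Assembling the theorem.} Iterating the padding lemma on the $3\times3\times3$ base yields, for all $a,b,c\ge3$, a strict subgame-stable $a\times b\times c$ game that is not weakly acyclic; iterating it on $\Gamma^\star$ (padding the two four-strategy players) does the same for all $2\times b\times c$ with $b,c\ge4$, and relabeling players covers the sizes in which the two-strategy player is not listed last. These exhaust the sizes asserted in the theorem. The step I expect to be the real obstacle is the construction of $\Gamma^\star$ and, above all, the verification of its subgame stability: arranging a $2\times4\times4$ payoff table that both sustains an inescapable cycle and keeps all of its subgames equilibrium-bearing is delicate, whereas the $3\times3\times3$ case reduces essentially to \cite{YT02} and the padding argument is routine once stated.
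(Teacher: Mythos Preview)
Your plan is the paper's plan: a $3\times3\times3$ base (essentially the one in~\cite{YT02}, which is already strict), a $2\times4\times4$ base, and padding to all larger sizes. Where you and the paper diverge is in how the $2\times4\times4$ example is built and how subgame stability is checked. The paper does \emph{not} use dominance relations; both base examples share a uniform ``cycle-plus-sheath'' design: a global equilibrium $s^*$ with payoff $(5,5,5)$, a simple better-response cycle $C$ every node of which is at edit distance $\ge 2$ from $s^*$ and has payoffs in $\{4,5\}$, and every remaining profile $(a_i,b_j,c_k)$ given the ``sheath'' payoff $(i,j,k)$. Because $C$ uses every strategy of every player, any proper subgame breaks $C$; the last surviving cycle node before the break is then an equilibrium (its unique improving move is gone, and sheath moves never improve), while subgames containing $s^*$ or no cycle node are handled trivially. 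This structural argument is exactly what fills the gap you flag as the real obstacle, and it replaces your anticipated case analysis with a three-line uniform check. The paper's padding is also simpler than yours: it assigns fresh negative, index-based payoffs (e.g.\ $u_i(s)=-s_i$) to all added profiles, which preserves strictness, subgame stability, and the sink without any cloning.
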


\begin{proof}
The first half of the theorem follows directly from a specific 
example game in \cite{YT02}. There, the strict
3-player, $3\times 3\times 3$ game in question is stated to
demonstrate that $\SSS$ does not imply weak acyclicity under best
response. Actually, their very same 
example is not even weakly
acyclic under better response.
\onlyshort{Here, we provide a $2\times 4\times 4$ 
example to establish the second half of the theorem, and a $3\times 3\times 3$
counterexample slightly cleaner than the one in \cite{YT02}.  Close
inspection of the games $\Gamma_{3,3,3}$ and $\Gamma_{4,4,2}$ shown in
Figure \ref{fig:counterex3p} reveals that these are not weakly acyclic
but are strict and subgame stable.}
\onlylong{Here, we examine a $2\times 4\times 4$ 
example to establish the second half of the theorem, and a $3\times 3\times 3$ 
example that is slightly cleaner than the one in \cite{YT02}, both
shown in Figure~\ref{fig:counterex3p}.}

\begin{figure}[tbh]
{ 
\ifdim \pgfversion pt < 2 pt \else
\tikzstyle{every picture}+=[remember picture]
\fi
\begin{center}
\newcommand{\N}[2]{\ifdim \pgfversion pt < 2 pt #2 \else
\tikz[baseline]{\node[anchor=base] (#1){#2};} \fi}
\begin{tabular}{c|ccc|ccc|ccc|}
  \cline{2-10} 
  &
  \multicolumn{3}{c|}{$c_0$} &
  \multicolumn{3}{c|}{$c_1$} &
  \multicolumn{3}{c|}{$c_2$} \\
  \cline{2-10}
  &
  $b_0$ & $b_1$ & $b_2$ &
  $b_0$ & $b_1$ & $b_2$ &
  $b_0$ & $b_1$ & $b_2$
  \\
  \hline 
  \multicolumn{1}{|c|}{$a_0$} &
  \F $0,0,0$           & \F \N{n010}{$5,5,4$} & \F \N{n020}{$5,4,5$} &
  \F \N{n001}{$4,5,5$} & \F $0,1,1$           & \F \N{d021}{$0,2,1$} &
  \F \N{n002}{$5,5,4$} & \F \N{n012}{$5,4,5$} & \F \N{d022}{$0,2,2$}
  \\
  %
  \multicolumn{1}{|c|}{$a_1$} &
  \F \N{n100}{$5,4,5$} & \F $1,1,0$           & \F \N{n120}{$4,5,5$} &
  \F $1,0,1$           & \F $5,5,5$           & \F $1,2,1$ &
  \F $1,0,2$           & \F $1,1,2$           & \F $1,2,2$
  \\
  %
  \multicolumn{1}{|c|}{$a_2$} &
  \F \N{n200}{$4,5,5$} & \F $2,1,0$           & \F $2,2,0$ &
  \F \N{n201}{$5,5,4$} & \F $2,1,1$           & \F $2,2,1$ &
  \F $2,0,2$           & \F $2,1,2$           & \F $2,2,2$
  \\
  \hline 
\end{tabular}
\\[1em]

\begin{tabular}{c|cccc|cccc|}
  \cline{2-9} 
  &
  \multicolumn{4}{c|}{$c_0$} &
  \multicolumn{4}{c|}{$c_1$} \\
  \cline{2-9}
  &
  $b_0$ & $b_1$ & $b_2$ & $b_3$ &
  $b_0$ & $b_1$ & $b_2$ & $b_3$
  \\
  \hline 
  \multicolumn{1}{|c|}{$a_0$} &
  \F \N{m000}{$5,5,5$} & \F $0,1,0$ & \F          $0,2,0$  & \F      $0,3,0$ &
  \F \N{m001}{$5,5,4$} & \F $0,1,1$ & \F \N{m021}{$5,4,5$} & \F      $0,3,1$
  \\
  %
  \multicolumn{1}{|c|}{$a_1$} &
  \F $1,0,0$       & \F $1,1,0$ & \F \N{m120}{$5,5,4$} & \F \N{m130}{$5,4,5$} &
  \F $1,0,1$       & \F $1,1,1$ & \F \N{m121}{$4,5,5$} & \F          $1,3,1$
  \\
  %
  \multicolumn{1}{|c|}{$a_2$} &
  \F $2,0,0$       & \F \N{m210}{$5,4,5$} & \F $2,2,0$ & \F \N{m230}{$4,5,5$} &
  \F $2,0,1$       & \F $2,1,1$           & \F $2,2,1$ & \F          $2,3,1$
  \\
  %
  \multicolumn{1}{|c|}{$a_3$} &
  \F \N{m300}{$5,4,5$} & \F \N{m310}{$4,5,5$} & \F $3,2,0$ & \F $3,3,0$ &
  \F $3,0,1$           & \F          $3,1,1$  & \F $3,2,1$ & \F $5,5,5$
  \\
  %
  \hline 
\end{tabular}
\end{center}
\ifdim \pgfversion pt < 2 pt
\else
\begin{tikzpicture}[overlay,>=latex]
\definecolor{edgecol}{rgb}{0.05,0.15,0.25}
\begin{scope}[color=edgecol,dashed]
\path[->] ($(n200.base)+(.2,0)$) edge [out= 100,in= 260,looseness=1.0] ($(n100.base)+(.2,0)$);
\path[->] ($(n100.base)+(.2,0)$) edge [out= 340,in= 225,in looseness=0.5] ($(n120.base)+(.2,0)$);
\path[->] ($(n120.base)+(.2,0)$) edge [out= 100,in= 260,looseness=1.0] ($(n020.base)+(.2,0)$);
\path[->] ($(n020.base)+(.2,0)$) edge [out= 210,in= 270,in looseness=1.0] ($(n010.center)+(.2,0)$);
\path[->] ($(n010.center)+(.2,0)$) edge [out=  90,in= 100,out looseness=0.2,in looseness=0.2] ($(n012.base)+(.2,0)$);
\path[->] ($(n012.base)+(.2,0)$) edge [out= 200,in= 335,looseness=1.0] ($(n002.base)+(.2,0)$);
\path[->] ($(n002.base)+(.2,0)$) edge [out= 200,in= 335,looseness=0.3] ($(n001.base)+(.2,0)$);
\path[->] ($(n001.base)+(.2,0)$) edge [out= 265,in=  95,looseness=1.0] ($(n201.base)+(.2,0)$);
\path[->] ($(n201.base)+(.2,0)$) edge [out= 200,in= 335,looseness=0.3] ($(n200.base)+(.2,0)$);
\path[->] ($(m001.center)+(.2,0)$) edge [out=  90,in=  40,out looseness=0.2,in looseness=0.2] ($(m000.center)+(.2,.1)$);
\path[->] ($(m000.center)+(.2,.1)$) edge [out= 265,in=  95,looseness=1.0] ($(m300.base)+(.2,0)$);
\path[->] ($(m300.base)+(.2,0)$) edge [out= 330,in= 200,looseness=0.5] ($(m310.base)+(.2,0)$);
\path[->] ($(m310.base)+(.2,0)$) edge [out= 100,in= 260,looseness=1.0] ($(m210.base)+(.2,0)$);
\path[->] ($(m210.base)+(.2,0)$) edge [out= 340,in= 225,looseness=0.3] ($(m230.base)+(.2,0)$);
\path[->] ($(m230.base)+(.2,0)$) edge [out= 100,in= 260,looseness=1.0] ($(m130.base)+(.2,0)$);
\path[->] ($(m130.base)+(.2,0)$) edge [out=  90,in=  80,looseness=0.6] ($(m120.center)+(.2,0)$);
\path[->] ($(m120.center)+(.2,0)$) edge [out= 310,in= 210,looseness=0.25] ($(m121.base)+(.2,0)$);
\path[->] ($(m121.base)+(.2,0)$) edge [out= 100,in= 260,looseness=0.5] ($(m021.base)+(.2,0)$);
\path[->] ($(m021.base)+(.2,0)$) edge [out= 210,in= 340,looseness=0.5] ($(m001.base)+(.2,0)$);
\end{scope}
\end{tikzpicture}
\fi
}
\vspace{-1em}
\caption{\label{fig:counterex3p}
3-player strict subgame stable games that are not weakly
  acyclic, even under better-response dynamics}
\vspace{-1em}
\end{figure}

In each of these three player games, there is a pure Nash equilibrium
in the full game, $s^*=(a_1,b_1,c_1)$ in $\Gamma_{3,3,3}$, and
$s^*=(a_3,b_3,c_1)$ in $\Gamma_{4,4,2}$, with utility 5 for each of
the players. In both, there is a cycle $C$, every profile in which
differs from $s^*$ in at least 2 players' strategies. Any profile
$(a_i,b_j,c_k)$ that's neither $s^*$ nor in $C$ yields utilities
$(i,j,k)$. With utilities in $C$ always in $\{4,5\}$, there is never an
incentive for anyone to unilaterally leave the cycle $C$, forming a
``sheath'' of low-utility states separating $C$ from the rest of the
game, particularly $s^*$. Thus $C$ is a persistent cycle. By
construction, the game is strict and at each state in $C$ there is a
unique player who has a better response to the current state.

Consider any subgame $\Gamma'$ of either game. If $\Gamma'$ contains
$s^*$, $s^*$ is a pure Nash equilibrium of $\Gamma'$ as well.

Suppose $\Gamma'$ is not the full game. In the course of cycling
through $C$, each strategy of each player is used at least once. Thus,
$\Gamma'$ cannot contain all of $C$. If it has at least some states of
$C$, pick one state that is in $\Gamma'$, and follow the edges of $C$
until you get to a state whose sole outbound better-response move has been
``broken'' by the better-response strategy being removed in
$\Gamma'$. This process will terminate, because $C$ is a simple cycle in
$\Gamma$ that had at least one node missing in $\Gamma'$.  The sole
player that had an incentive to move in that state in $\Gamma$ now no
longer has that option, and if he has any other strategy, the
resulting state cannot be in $C$, because $C$ never uses more than 2
strategies of any player $i$ in combination with any fixed
$s_{-i}$. Thus, any other strategy is not an improvement for that
player, either, and this new state is thus a pure Nash equilibrium in
$\Gamma'$.

Lastly, if $\Gamma'$ contains neither $s^*$ nor any nodes of $C$,
taking the highest-index strategy for each player yields a profile
that has to be a pure Nash equilibrium, because the utilities of non-$C$,
non-$s^*$ profiles are just $(i,j,k)$.

Thus, every subgame is guaranteed to have a pure Nash equilibrium, and, due
to $C$, both games are not weakly acyclic. The theorem holds for games
with more strategies by padding the 
examples given above.
\end{proof}

With 4 or more players, a more mechanistic approach produces
analogous examples even with just 2 strategies per player:

\begin{theorem}
\label{thm:counterex4p}
In a strict $n$-player game for an arbitrary $n\geq 4$, the existence of
pure Nash equilibria in every subgame is insufficient to guarantee
weak acyclicity, even with only 2 strategies per player.
\end{theorem}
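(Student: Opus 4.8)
The plan is to give, for every $n\ge 4$, a single uniformly-defined strict $n$-player game $\Gamma_n$ with strategy set $\{0,1\}$ for each player, realizing the same template as the $3$-player examples of Theorem~\ref{thm:counterex3p}: a designated global pure Nash equilibrium $s^*$ of very high utility, a cycle $C$ of profiles each differing from $s^*$ in at least two coordinates, along which exactly one player has a better response at every state (the cycle edge), and a ``sheath'' of low-utility profiles surrounding $C$ so that nobody is ever tempted to leave it. Then $C$ is a non-trivial sink (so $\Gamma_n$ is not weakly acyclic, even under better response), while every subgame still has a pure Nash equilibrium.

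Concretely, I would take $s^*=(0,\dots,0)$ and let $C$ be the ``middle-layer window cycle'' of length $2n$ in the $n$-cube on $\{0,1\}^n$: identifying the coordinates with $\mathbb Z_n$ and writing $\chi_S$ for the indicator vector of $S\subseteq\mathbb Z_n$, let $A_j=\{j,j+1\}$ and $B_j=\{j,j+1,j+2\}$ (indices mod $n$), and let $C$ run
\[
\chi_{A_0}\ \to\ \chi_{B_0}\ \to\ \chi_{A_1}\ \to\ \chi_{B_1}\ \to\ \cdots\ \to\ \chi_{A_{n-1}}\ \to\ \chi_{B_{n-1}}\ \to\ \chi_{A_0},
\]
where the step $\chi_{A_j}\to\chi_{B_j}$ flips player $j+2$ from $0$ to $1$ and the step $\chi_{B_j}\to\chi_{A_{j+1}}$ flips player $j$ from $1$ to $0$. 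The combinatorial heart of the argument is to check (and this is exactly where $n\ge 4$ is used; at $n=3$ the construction degenerates) that (i) every vertex of $C$ has Hamming weight $2$ or $3$, so no state of $C$ is $s^*$ or adjacent to $s^*$; (ii) $C$ is an induced cycle (chord-free) in the $n$-cube; (iii) every coordinate is flipped exactly twice along $C$, so both strategies of every player are used; and (iv) no player is the flipping player at two cyclically consecutive steps, nor at two steps two apart. For the utilities: on $C$, set $u_i(v)=2$ if player $i$ is the one about to flip at $v$ and $u_i(v)=3$ otherwise (property (iv) makes this consistent and makes the cycle edge the unique better response at each state); on any ``generic'' profile $s$ (one that is neither $s^*$ nor on $C$) set $u_i(s)=s_i$; and set $u_i(s^*)=4$ for all $i$.

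The verification is then a direct copy of the $3$-player reasoning. Strictness: a one-coordinate deviation stays among generic profiles (where $u_i$ flips between $0$ and $1$), or is a $C$-edge for the flipping player (whose utility changes between $2$ and $3$, using chord-freeness so that two $C$-profiles differing in one coordinate are always a $C$-edge), or involves $s^*$ (utility $4$, distinct from the utility $1$ of its generic neighbours); $C$ and $s^*$ are never adjacent, and $C$-values ($\{2,3\}$) and generic values ($\{0,1\}$) are disjoint. Not weakly acyclic: at each state of $C$ the unique better response is the cycle edge, and every non-$C$ neighbour of a $C$-state is generic with utility $\le 1$, so $C$ is an inescapable cycle. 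Subgame stability: for a subgame $\Gamma'$, if $\Gamma'$ contains $s^*$ then $s^*$ is a pure Nash equilibrium of $\Gamma'$; otherwise some player is restricted to strategy $1$, so by (iii) $\Gamma'$ cannot contain all of $C$, and either $\Gamma'$ contains some state of $C$ — in which case following $C$ until a removed strategy breaks the (unique) outgoing better response yields a pure Nash equilibrium of $\Gamma'$, since the one player who wanted to move is now stuck and removing strategies cannot create a better response for anyone else (compare Lemma~\ref{lma:222}'s style of argument) — or $\Gamma'$ contains no state of $C$, in which case the profile in which each player plays its highest available strategy is generic and, because $u_i(1,s_{-i})>u_i(0,s_{-i})$ on generic profiles, is a pure Nash equilibrium of $\Gamma'$.

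The main obstacle is squarely the choice of $C$: once a family of cycles in $\{0,1\}^n$ satisfying (i)--(iv) is in hand, the utility assignment and the subgame case analysis are routine adaptations of Theorem~\ref{thm:counterex3p}. The window cycle above is the natural candidate; what has to be done carefully is confirming it is chord-free and that no player flips at two consecutive or two $2$-apart steps, since those two facts are what make the $\{2,3\}$ pattern well defined and force a unique mover at each state of $C$. (For games with more than two strategies per player one simply pads, exactly as in Theorem~\ref{thm:counterex3p}, though here the point is already made with two strategies.)
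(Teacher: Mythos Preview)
Your construction is correct and follows the same template as the paper's: a high-utility global equilibrium, a length-$2n$ ``sliding-window'' cycle through the hypercube with the $\{2,3\}$ utility pattern on the cycle, a low-utility sheath $u_i(s)=s_i$ elsewhere, and the same three-case analysis for subgame stability. The only cosmetic difference is the placement: the paper puts the equilibrium at $(1,\dots,1)$ and cycles through the weight-$1$ and weight-$2$ consecutive-index profiles, whereas you put the equilibrium at $(0,\dots,0)$ and cycle through weight-$2$ and weight-$3$ consecutive-index profiles (for $n=4$ the two cycles are bit-complements of one another).
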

\begin{proof}
For strategy profiles in $\{0,1\}^n$, using indices mod $n$, set the utilities to:
\begin{equation*}
\vec{u}(\vec{s}) = \begin{cases}
(4,\ldots,4) & \text{at $\vec{s}=(1,\ldots,1)$} \\
(3,\ldots,3,\underset{\text{$i$'th}}{2},3,\ldots,3) & \text{when
    $s_{i-1}=s_i=1$, $s_{-(i-1,i)}=0$}\\
(3,\ldots,3,\underset{\text{$i+1$'th}}{2},3,\ldots,3) & \text{when
    $s_{i}=1$, $s_{-i}=0$}\\
\vec{s} & \text{else (for the ``sheath'')}.
\end{cases}
\end{equation*}
Similarly to Theorem \ref{thm:counterex3p}, this plants a global pure Nash
equilibrium at $(1,\ldots,1)$, and creates a ``fragile'' better-response
cycle. Here, the cycle alternates between profiles with edit distance
$n-1$ and $n-2$ from the global pure Nash equilibrium. At every point of
the cycle, the only non-sheath profiles 1 step away are its
predecessor and successor on the cycle, so the cycle is
persistent. Because each profile with edit distance $n-1$ from the
equilibrium is covered, removing any player's 1 strategy breaks the
cycle, thus guaranteeing a pure Nash equilibrium in every subgame by the
same reasoning as above.
\end{proof}

We note that the fixed-size examples that demonstrate the negative results above---in Theorems
\ref{thm:counterex-ss}, \ref{thm:counterex3p}, and \ref{thm:counterex4p}---easily extend to games with extra strategies for some or all players, or with extra players,
by ``padding'' the added part of the payoff table with negative,
unique values that, for the added profiles, make payoffs independent
of the other players, such as, e.g., $u_i(s)=-s_i$. This preserves
$\SS$, $\SSS$, and $\USS$ properties without changing weak
acyclicity. Thus, this completes our classification of weak acyclicity
under the three subgame-based properties, as shown in Table
\ref{table:summary}.

\section{Concluding remarks}\label{sec:conc}

The connection between weak acyclicity and unique subgame stability
that we present is surprising, but not immediately practicable: in
most succinct game representations, there is no reason to believe that
checking unique subgame stability will be tractable in many general
settings. In a complexity-theoretic sense, $\USS$ is
\emph{closer} to tractability than weak acyclicity: Any reasonable
game representation will have some ``reasonable'' representation of
subgames, i.e., one in which \emph{checking} whether a state is a pure
Nash equilibrium is tractable, which puts unique subgame stability in a
substantially easier complexity class, $\Pi_3 P$, than the class PSPACE
for which weak acyclicity is complete in many games.

We leave open the important question of finding efficient algorithms
for checking unique subgame stability, which may well be feasible in
particular classes of games. Also open and relevant, of course, is the
question of more broadly applicable and tractable conditions for weak
acyclicity. In particular, there may well be other levels of the
subgame stability hierarchy between $\SSS$ and $\USS$ that could give
us weak acyclicity in broader classes of games.


\begin{thebibliography}{10}
\providecommand{\url}[1]{{#1}}
\providecommand{\urlprefix}{URL }
\expandafter\ifx\csname urlstyle\endcsname\relax
  \providecommand{\doi}[1]{DOI~\discretionary{}{}{}#1}\else
  \providecommand{\doi}{DOI~\discretionary{}{}{}\begingroup
  \urlstyle{rm}\Url}\fi

\bibitem{ES09}
Engelberg, R., Schapira, M.: Weakly-acyclic ({I}nternet) routing games (2011).
\newblock Working paper

\bibitem{fjs10sagt}
Fabrikant, A., Jaggard, A.D., Schapira, M.: On the structure of weakly acyclic
  games.
\newblock In: Proceedings of SAGT, pp. 126--137 (2010)

\bibitem{FP08}
Fabrikant, A., Papadimitriou, C.H.: The complexity of game dynamics: {BGP}
  oscillations, sink equilibria, and beyond.
\newblock In: Proceedings of ACM--SIAM SODA, pp. 844--853 (2008)

\bibitem{FM01}
Friedman, J.W., Mezzetti, C.: Learning in games by random sampling.
\newblock Journal of Economic Theory \textbf{98}, 55--84 (2001)

\bibitem{GMV05}
Goemans, M., Mirrokni, V., Vetta, A.: Sink equilibria and convergence.
\newblock In: Proceedings of IEEE FOCS, pp. 142--151 (2005)

\bibitem{GSW02}
Griffin, T.G., Shepherd, F.B., Wilfong, G.: The stable paths problem and
  interdomain routing.
\newblock IEEE/ACM Trans. Netw. \textbf{10}(2), 232--243 (2002)

\bibitem{jsw11ics}
Jaggard, A.D., Schapira, M., Wright, R.N.: Distributed computing with adaptive
  heuristics.
\newblock In: The Second Symposium on Innovations in Computer Science, pp.
  417--443 (2011)

\bibitem{KTY05}
Kukushkin, N.S., Takahashi, S., Yamamori, T.: Improvement dynamics in games
  with strategic complementarities.
\newblock International Journal of Game Theory \textbf{33}(2), 229--238 (2005)

\bibitem{LSZ08}
Levin, H., Schapira, M., Zohar, A.: Interdomain routing and games.
\newblock In: Proceedings of ACM STOC, pp. 57--66 (2008)

\bibitem{MAS07}
Marden, J., Arslan, G., Shamma, J.: Connections between cooperative control and
  potential games illustrated on the consensus problem.
\newblock In: Proceedings of the European Control Conference (2007)

\bibitem{MYAS07}
Marden, J.R., Young, H.P., Arslan, G., Shamma, J.S.: Payoff-based dynamics in
  multi-player weakly acyclic games.
\newblock SIAM Journal on Control and Optimization \textbf{48}, 373--396 (2009)

\bibitem{Mil96}
Milchtaich, I.: Congestion games with player-specific payoff functions.
\newblock Games and Economic Behavior \textbf{13}, 111--124 (1996)

\bibitem{MS09}
Mirrokni, V.S., Skopalik, A.: On the complexity of nash dynamics and sink
  equilibria.
\newblock In: ACM Conference on Electronic Commerce, pp. 1--10 (2009)

\bibitem{MS96}
Monderer, D., Shapley, L.S.: Potential games.
\newblock Games and Economic Behavior \textbf{14}, 124--143 (1996)

\bibitem{Moulin79}
Moulin, H.: Dominance solvable voting schemes.
\newblock Econometrica \textbf{47}, 1337--1351 (1979)

\bibitem{NSZ08}
Nisan, N., Schapira, M., Zohar, A.: Asynchronous best-reply dynamics.
\newblock In: Proceedings of the Workshop on Internet Economics, pp. 531--538
  (2008)

\bibitem{Ros73}
Rosenthal, R.W.: A class of games possessing pure-strategy {N}ash equilibria.
\newblock International Journal of Game Theory \textbf{2}, 65--67 (1973)

\bibitem{YT02}
Yamamori, T., Takahashi, S.: The pure {N}ash equilibrium property and the
  quasi-acyclic condition.
\newblock Economics Bulletin \textbf{3}(22), 1--6 (2002).
\newblock
  \urlprefix\url{http://econpapers.repec.org/RePEc:ebl:ecbull:v:3:y:2002:i:22:%
p:1-6}

\bibitem{PeytonYoung93}
Young, H.P.: The evolution of conventions.
\newblock Econometrica \textbf{61}(1), 57--84 (1993)

\end{thebibliography}
\end{document}